\PassOptionsToPackage{unicode}{hyperref}
\PassOptionsToPackage{hyphens}{url}
\PassOptionsToPackage{dvipsnames,svgnames,x11names}{xcolor}
\documentclass[
  12pt]{article}

\usepackage{amsmath,amssymb}
\usepackage{iftex}
\ifPDFTeX
  \usepackage[T1]{fontenc}
  \usepackage[utf8]{inputenc}
  \usepackage{textcomp} 
\else 
  \usepackage{unicode-math}
  \defaultfontfeatures{Scale=MatchLowercase}
  \defaultfontfeatures[\rmfamily]{Ligatures=TeX,Scale=1}
\fi
\usepackage{lmodern}
\ifPDFTeX\else  
\fi
\IfFileExists{upquote.sty}{\usepackage{upquote}}{}
\IfFileExists{microtype.sty}{
  \usepackage[]{microtype}
  \UseMicrotypeSet[protrusion]{basicmath} 
}{}
\makeatletter
\@ifundefined{KOMAClassName}{
  \IfFileExists{parskip.sty}{%
    \usepackage{parskip}
  }{
    \setlength{\parindent}{0pt}
    \setlength{\parskip}{6pt plus 2pt minus 1pt}}
}{
  \KOMAoptions{parskip=half}}
\makeatother
\usepackage{xcolor}
\setlength{\emergencystretch}{3em} 
\setcounter{secnumdepth}{5}
\makeatletter
\ifx\paragraph\undefined\else
  \let\oldparagraph\paragraph
  \renewcommand{\paragraph}{
    \@ifstar
      \xxxParagraphStar
      \xxxParagraphNoStar
  }
  \newcommand{\xxxParagraphStar}[1]{\oldparagraph*{#1}\mbox{}}
  \newcommand{\xxxParagraphNoStar}[1]{\oldparagraph{#1}\mbox{}}
\fi
\ifx\subparagraph\undefined\else
  \let\oldsubparagraph\subparagraph
  \renewcommand{\subparagraph}{
    \@ifstar
      \xxxSubParagraphStar
      \xxxSubParagraphNoStar
  }
  \newcommand{\xxxSubParagraphStar}[1]{\oldsubparagraph*{#1}\mbox{}}
  \newcommand{\xxxSubParagraphNoStar}[1]{\oldsubparagraph{#1}\mbox{}}
\fi
\makeatother

\usepackage{longtable,booktabs,array}
\usepackage{calc} 
\usepackage{etoolbox}
\makeatletter
\patchcmd\longtable{\par}{\if@noskipsec\mbox{}\fi\par}{}{}
\makeatother
\IfFileExists{footnotehyper.sty}{\usepackage{footnotehyper}}{\usepackage{footnote}}
\makesavenoteenv{longtable}
\usepackage{graphicx}
\makeatletter
\def\maxwidth{\ifdim\Gin@nat@width>\linewidth\linewidth\else\Gin@nat@width\fi}
\def\maxheight{\ifdim\Gin@nat@height>\textheight\textheight\else\Gin@nat@height\fi}
\makeatother
\setkeys{Gin}{width=\maxwidth,height=\maxheight,keepaspectratio}
\makeatletter
\def\fps@figure{htbp}
\makeatother

\addtolength{\oddsidemargin}{-.5in}%
\addtolength{\evensidemargin}{-.1in}%
\addtolength{\textwidth}{1in}%
\addtolength{\textheight}{1.7in}%
\addtolength{\topmargin}{-1in}
\makeatletter
\@ifpackageloaded{caption}{}{\usepackage{caption}}
\AtBeginDocument{%
\ifdefined\contentsname
  \renewcommand*\contentsname{Table of contents}
\else
  \newcommand\contentsname{Table of contents}
\fi
\ifdefined\listfigurename
  \renewcommand*\listfigurename{List of Figures}
\else
  \newcommand\listfigurename{List of Figures}
\fi
\ifdefined\listtablename
  \renewcommand*\listtablename{List of Tables}
\else
  \newcommand\listtablename{List of Tables}
\fi
\ifdefined\figurename
  \renewcommand*\figurename{Figure}
\else
  \newcommand\figurename{Figure}
\fi
\ifdefined\tablename
  \renewcommand*\tablename{Table}
\else
  \newcommand\tablename{Table}
\fi
}
\@ifpackageloaded{float}{}{\usepackage{float}}
\floatstyle{ruled}
\@ifundefined{c@chapter}{\newfloat{codelisting}{h}{lop}}{\newfloat{codelisting}{h}{lop}[chapter]}
\floatname{codelisting}{Listing}

\makeatother
\makeatletter
\makeatother
\makeatletter
\@ifpackageloaded{caption}{}{\usepackage{caption}}
\@ifpackageloaded{subcaption}{}{\usepackage{subcaption}}
\makeatother

\ifLuaTeX
  \usepackage{selnolig}  
\fi
\usepackage[]{natbib}
\bibliographystyle{agsm}
\usepackage{bookmark}

\IfFileExists{xurl.sty}{\usepackage{xurl}}{} 
\urlstyle{same} 
\hypersetup{
  pdftitle={Title},
  pdfauthor={Author 1; Author 2},
  pdfkeywords={3 to 6 keywords, that do not appear in the title},
  colorlinks=true,
  linkcolor={blue},
  filecolor={Maroon},
  citecolor={Blue},
  urlcolor={Blue},
  pdfcreator={LaTeX via pandoc}}

\newcommand\independent{\protect\mathpalette{\protect\independenT}{\perp}}
\def\independenT#1#2{\mathrel{\rlap{$#1#2$}\mkern2mu{#1#2}}}
\def\independenT#1#2{\mathrel{\rlap{$#1#2$}\mkern2mu{#1#2}}}

\usepackage{amsthm,amsmath,amssymb,mathrsfs,dsfont,mathtools, amsfonts}

\usepackage{algorithm}
\usepackage{algpseudocode}

\newtheorem{theorem}{Theorem}

\newtheorem{proposition}{Proposition}
\newtheorem{assumption}{Assumption}

\usepackage{dirtytalk}

\newcommand{\anon}{1}


\begin{document}

\def\spacingset#1{\renewcommand{\baselinestretch}%
{#1}\small\normalsize} \spacingset{1}


\if1\anon
{
  \title{\bf Pathway-based Bayesian factor models for 'omics data}
  \author{Lorenzo Mauri* \\
    Department of Statistical Science, Duke University, Durham, NC, USA\\
    Federica Stolf*\\
    Department of Statistical Science, Duke University, Durham, NC, USA\\
     Amy H. Herring\\
    Department of Statistical Science, Duke University, Durham, NC, USA\\
     Cameron Miller\\
    Department of Medicine, Duke University School of Medicine, Durham, NC, USA\\
    and\\
     David B. Dunson\\
    Department of Statistical Science, Duke University, Durham, NC, USA\\
    {\small *Equal contribution}}
  \maketitle
} \fi

\if0\anon
{
  \bigskip
  \bigskip
  \bigskip
  \begin{center}
    {\LARGE\bf Pathway-based Bayesian factor models for 'omics data}
\end{center}
  \medskip
} \fi

\bigskip
\begin{abstract}

Interpreting RNA-sequencing data requires identifying coordinated gene expression patterns that correspond to biological pathways. Standard factor models provide useful dimension reduction but typically ignore existing pathway knowledge or incorporate it through restrictive assumptions, limiting interpretability, and reproducibility. Here, we develop Bayesian Analysis with gene-Sets Informed Latent space (BASIL), a scalable framework for analyzing transcriptomic data that integrates annotated gene sets into latent variable inference.  BASIL places structured priors on factor loadings, shrinking them toward combinations of annotated gene sets, enhancing biological interpretability and stability, while simultaneously learning new unstructured components. BASIL provides accurate covariance estimates and uncertainty quantification, without resorting to computationally expensive Markov chain Monte Carlo sampling, by exploiting a pre-training approach that pre-estimates the latent factors. An automatic empirical Bayes procedure eliminates the need for manual hyperparameter tuning, promoting reproducibility and usability in practice. Applying BASIL to the global fever transcriptomic cohort uncovers interpretable host-response modules, with phosphoinositide signaling and interferon-driven inflammation emerging as key drivers of gene-expression variability.

\end{abstract}

\noindent%
{\it Keywords:}  High-dimensional; 'Omics data; Scalable Bayesian computation; Shrinkage priors
\vfill

\newpage
\spacingset{1.8} 

\section{Introduction}

High-throughput transcriptomic technologies generate measurements for tens of thousands of genes, creating data with immense promise, but also fundamental challenges in statistical analysis, especially when the number of samples is not very large. One central task in the analysis of gene expression from high-throughput RNA-sequencing data is to extract relevant biological insights. Patterns of correlated expression among genes are frequently observed in such datasets and can  
reflect coordinated transcriptional regulation, in which genes involved in the same pathway or cellular function are regulated in concert to maintain synchronized activity \citep{myers2006finding}. 
Accurate discovery of biologically relevant signals in massive dimensional noisy data is crucial for understanding the molecular mechanisms driving differences across conditions, cell types, or perturbations. 

Matrix factorization techniques, such as principal component analysis (PCA) \citep{jolliffe2002pca} or factor analysis \citep{lawley1971factor, west03}, are widely used to infer components of variation in multivariate gene expression data.
These methods rely on latent variable representations that can effectively reduce the dimensionality of the data. When the sample size is limited and model interpretability is crucial, incorporating prior knowledge into the factorization in the form of structured data can be particularly beneficial. 
Structured matrix factorizations have been proposed to incorporate such biological information directly into the modeling framework. Examples include leveraging gene–gene interaction networks to guide factors towards known pathways \citep{elyanow2020netnmf} and using gene perturbation data to associate factors with perturbation effects in single-cell CRISPR screens \citep{zhou2023new}. Other contributions target scRNA-seq data \citep{buettner2017f, kunes2024supervised}.

In this work, we focus on bulk RNA-sequencing data, which measure genome-wide gene expression by quantifying the aggregate abundance of RNA transcripts across all cells within a sample. Here prior biological knowledge typically takes the form of gene sets. Let $Y$ denotes the normalized gene expression matrix obtained from RNA-seq or microarray experiments, with rows representing samples and columns representing genes. The prior biological information is encoded in a binary matrix $C$, where each column corresponds to a molecular pathway, gene ontology category, or any other predefined gene collection of interest. We are specifically motivated by a transcriptomic dataset from a global fever study \citep{ko2023host}, comprising $291$ patients and $14,386$ genes. As is common in such settings, the number of genes far exceeds the number of samples, making the incorporation of prior biological information particularly valuable for improving interpretability and estimation accuracy. In this cohort, whole-blood gene expression reflects coordinated host-response programs to diverse infectious etiologies and non-infectious inflammatory mimics, spanning multiple regions. 
This makes it a natural setting for pathway-informed factor modeling, where curated gene sets guide the extraction of interpretable latent modules
that summarize high-dimensional variation. 

The state of the art in this setting is PLIER (pathway-level information extractor) \citep{mao2019pathway}, a matrix factorization approach that maps gene expression data into a low-dimensional space. 
Integrating pathway-informed annotations improves interpretability by yielding latent components aligned with known pathways. The PLIER framework has been extensively refined \citep{pividori2023projecting, taroni2019multiplier, zhang2024mousiplier}, but  key problems remain. First, it restricts factor loadings to be positive, which rules out commonly observed and biologically meaningful negative correlations
\citep{zeng2010maximization, tu2015using}. Hence, PLIER estimates need ad hoc post-processing, which results in sub-optimal accuracy, as we will show in our numerical experiments. PLIER is highly sensitive to hyperparameter choices, requiring computationally expensive tuning and potentially leading to poor recovery of the empirical gene correlation structure, as illustrated in Figure~\ref{fig:method} for our motivating global fever dataset.
Furthermore, PLIER and related methods (e.g. \cite{buettner2017f}) do not provide  uncertainty quantification (UQ) in the inferred covariance matrix.

\begin{figure}[t]
\begin{center}
         \includegraphics[width=0.99\textwidth]{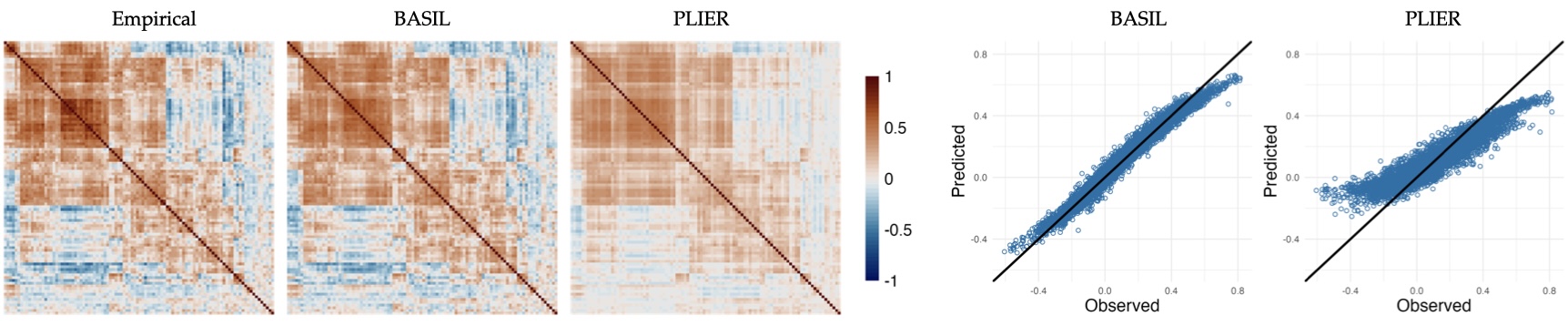} 
    \caption{Empirical gene expression correlation matrix computed from a subset of 100 genes 
    from a global fever study and correlation matrices estimated with BASIL and PLIER, reconstructed through latent variable factorization. The scatter plots show observed versus predicted correlation values.}
    \label{fig:method}
\end{center}
\end{figure}

To overcome these  limitations, we develop  Bayesian Analysis with gene-Sets Informed Latent space (BASIL). 
 BASIL is a factor model that incorporates the auxiliary information contained in the gene sets and adopts a Bayesian approach to inference. 
Bayesian methods offer a natural way to incorporate known gene sets through structured prior distributions and provide straightforward uncertainty quantification. The proposed approach projects high-dimensional gene expression data onto a low-dimensional set of latent factors, each representing an interpretable combination of gene sets. By integrating knowledge of gene sets, these latent variables capture structured variation in expression and provide meaningful insights into molecular processes. The number of latent factors is automatically selected by minimizing an information criterion, which consistently identified the correct number in our simulation studies. 

BASIL relies on a scalable and computationally efficient algorithm that enables fast inference even when analyzing 10,000s of genes. Posterior computation exploits a blessing of dimensionality to motivate pre-estimation of the latent factors \citep{fable, flair}, avoiding computationally expensive Markov chain Monte Carlo (MCMC) sampling. Related Bayesian factor modeling approaches that incorporate auxiliary or meta-covariate information have been proposed in environmental health studies \citep{bersson2025covariance}, ecology \citep{schiavon2022generalized}, and single-cell RNA sequencing data \citep{canale2023structured}. The \cite{canale2023structured} approach imposes structured shrinkage towards a diagonal covariance, which can lead to underestimation of important covariation patterns in the data.
Moreover, these methods rely on MCMC algorithms that become computationally prohibitive in high-dimensional 'omics settings. 

We show promising performance of BASIL through extensive simulation studies and analyzing two different transcriptomic datasets: the recent global fever dataset \citep{ko2023host}, which has not previously been analyzed using pathway-informed techniques, and the whole-blood RNA-seq data provided in \cite{mao2019pathway} and previously analyzed with PLIER.
The latter, while already examined in \cite{mao2019pathway}, is of particular interest because it constitutes the benchmark dataset for PLIER and represents a markedly high-dimensional setting, with only 36 samples and approximately 5,892 genes.
Such a pronounced $p \gg n$ setting is precisely where principled uncertainty quantification becomes most consequential as we will show in Section \ref{sec:WBanalysis}.
Overall, the results demonstrate that BASIL not only identifies biologically significant gene modules but also achieves superior predictive accuracy compared to state of the art approaches, while additionally providing principled uncertainty quantification that enables assessment of variability in the estimated latent factors.

\section{Bayesian Analysis with gene-Sets Informed Latent space}\label{sec:basil}

\subsection{Setup}\label{subsec:model}

Let $Y \in \mathbb{R}^{n\times p}$ be the gene expression profile, where $n$ is the number of samples, $p$ is the number of genes and usually $n \ll p$. Let $y_i = (y_{i1},\ldots,y_{ip})^{\top}$ denote the $i$-th row of $Y$ for $i=1,\dots,n$. 
We model the $i-$th sample as follows 
\begin{eqnarray}
y_i = \Lambda \eta_i + \epsilon_i,\quad 
\epsilon_i \sim N_p(0,\Sigma), \label{eq:factor1}
\end{eqnarray}
with $\eta_i = (\eta_{i1},\ldots,\eta_{ik})^{\top} \sim N_k(0, I_k)$  latent factors for patient $i$ where $k$ is the number of factors, 
$\Lambda$ is a $p \times k$ loadings  matrix relating these latent factors to the different gene expressions, and 
$\epsilon_i$ is a residual term having covariance matrix $\Sigma = \sigma^2 I_p$. Integrating out the latent factors from \eqref{eq:factor1}, we obtain an equivalent representation of the model as 
$$y_i \sim N_p(0, \Lambda \Lambda^\top + \sigma^2 I_p).$$
In particular, \eqref{eq:factor1} only requires $p(k+1) - {k \choose 2}$ effective parameters to model the $p \times p$ covariance of $y_i$ dramatically reducing the dimensionality of the problem.

\subsection{Biologically informed prior distributions}\label{subsec:prior}

In order to incorporate prior knowledge of gene sets into learning of gene expression factors, BASIL defines a hierarchical structure in the factor loading matrix $\Lambda$ in \eqref{eq:factor1}.
We denote by $C$ the matrix of $p \times q$ gene sets, where $q$ is the number of gene sets. 
Let $c_j=(c_{j1},\ldots,c_{jq})^{\top} \in \{0,1\}^q$, with $c_{jl}=1$ indicating that the gene $j$ belongs to the gene set $l$ for $j=1,\ldots,p$ and $l=1,\ldots,q$. 
 Specifically, we decompose the loading matrix as
\begin{eqnarray}
\Lambda = C \Gamma + \Psi, \label{eq:loads} 
\end{eqnarray}
where $C$ is the gene sets matrix, $\Gamma = \{ \gamma_{lh} \}$ is a $q \times k$ matrix linking gene expression factors to gene sets ($\gamma_{lh} \neq 0$ if factor $h$ is linked to gene set $l$), and $\Psi$ is an unstructured $p \times k$ matrix characterizing the variation in the gene expressions not related to pre-existing gene sets. 
To enforce identifiability, up to orthogonal transformation, we require $\Psi \in \mathcal N(C)$, where $\mathcal N (C)$ denotes the null space of $C$.
Expressions \eqref{eq:factor1}-\eqref{eq:loads} imply the following covariance structure in the gene expression data
\begin{eqnarray*}
\mbox{cov}(y_i) = [C\Gamma + \Psi][\Gamma^{\top}C^{\top} + \Psi^{\top}] + \Sigma = C\Gamma \Gamma^\top  C^\top + \Psi \Psi^\top + \Sigma , \label{eq:cov}    
\end{eqnarray*}
which is the sum of two low-rank components and a diagonal matrix. 
Hence, the covariance matrix depends on the matrix of gene sets $C$, which guides the inference.
In many cases, including our motivating application, we expect pathways to carry abundant information, with $C$ explaining most of the variability, and, hence, $\Lambda \approx C \Gamma$. Nevertheless, knowledge of pathways is often incomplete, and $\Psi$ models the effect of unknown components. This structure allows for the incorporation of known gene ontology information while addressing the fact that many genes are not in any existing gene set. This represents a key advantage of BASIL over PLIER \citep{mao2019pathway}, which does not explicitly distinguish between structured and unstructured sources of variability.

Given the structure in \eqref{eq:loads}, one could naively endow entries $\Gamma$ and $\Psi$ with conditionally Gaussian priors and perform posterior computation via Gibbs sampling. We identify two main difficulties related to such an approach: (i) the orthogonality condition would not be respected, hindering the interpretability of the results and potentially worsening mixing of the MCMC sampling, and (ii) the complete conditional of $\Gamma$ would depend on the entire data matrix $Y$ due to the lack of conditional independence for the presence of $C$, substantially increasing its computational cost.
These considerations motivated us to consider an alternative specification. In particular, let $B_{\mathcal C} \in \mathbb R^{p \times q}$ and  $B_{\mathcal N} \in \mathbb R^{p \times p-q}$ denote orthonormal bases of the column space and null space of $C$, which we denote by $\mathcal C(C)$ and $\mathcal N(C)$, respectively. A canonical choice of $B_{\mathcal C}$ is the matrix of left singular vectors of $C$ associated to its $q$ non-zero singular values. The formulation in 
\eqref{eq:loads} can be equivalently represented as 
\begin{eqnarray}
\Lambda = B_{\mathcal C} \Lambda_{\mathcal C} + B_{\mathcal N} \Lambda_{\mathcal N}, \label{eq:loads2} \end{eqnarray}
where $\Lambda_{\mathcal C}  = B_{\mathcal C}^\top \Lambda = B_{\mathcal C}^\top C \Gamma \in \mathbb R^{q \times k}$ and  $\Lambda_{\mathcal N} = B_{\mathcal N}^\top \Lambda = B_{\mathcal N}^\top \Psi \in \mathbb R^{p-q \times k}$ are the linear combinations of the components $\Lambda$ with respect to the choice of basis for $\mathcal C(C)$ and $\mathcal N (C)$ respectively.  
This automatically enforces orthogonality between the two components and, as we show in the next section, allows a simple and computationally efficient posterior update.

We choose shrinkage priors for the elements of $\Lambda_{\mathcal C}$ and $\Lambda_{\mathcal N}$:
\begin{equation*}\label{eq:prior_lambda}
    \begin{aligned}
        \lambda_{\mathcal C l} \mid \sigma^2 &\sim \mathcal N_k(0, \tau_{\Gamma}^2 \sigma^2 I_k), \quad (l=1, \dots, q),\\ 
        \lambda_{\mathcal N l'}\mid \sigma^2 &\sim \mathcal N_k(0, \tau_{\Psi}^2 \sigma^2 I_k), \quad (l'=1, \dots, p-q),\\
    \end{aligned}
\end{equation*}
with $  \lambda_{\mathcal C l}$ and $\lambda_{\mathcal N l'}$ denoting the $l$-th and $l'$-th rows of $\Lambda_{\mathcal C}$ and $\Lambda_{\mathcal N}$, respectively.

Importantly, we allow for different degrees of shrinkage in $\Lambda_{\mathcal C}$ and $\Lambda_{\mathcal N}$ to accommodate cases where the strength of the signal explained by $C$ differs from the residual one, via the terms $\tau_{\Gamma}^2$ and $\tau_{\Psi}^2$. This structure avoids strong assumptions about the underlying biological processes while enabling BASIL to adaptively infer gene–pathway relationships from the data.
To enhance data-adaptivity and limit the practitioner's tuning effort, we develop an empirical Bayes \citep{morris1983parametric} strategy to select the prior shrinkage coefficients $\tau_{\Gamma}$ and $\tau_{\Psi}$.  For instance, we show that in absence of residual signal, BASIL collapses on the simpler model $\Lambda = C \Gamma$ as $\tau_{\Psi} \to 0$. This allows the data to inform about the degree to which the inferred lower-dimensional factors $\eta_i$ underlying the high-dimensional $y_i$ are linked to pre-existing gene sets. The structure also enables learning of new gene sets corresponding to genes that have non-zero loadings $\Lambda$. 

 We complete the prior specification with an inverse-gamma prior on $\sigma^2$, i.e. $ \sigma^2 \sim IG(v_0/2, \sigma_0^2v_0/2).$

\subsection{Posterior computation}

The uncertainty in $\Lambda_{\mathcal C}$ and $\Lambda_{\mathcal N}$ is encoded in their posterior distribution
\begin{equation*}    \label{eq:true_posterior}
\begin{aligned}
    p(\Lambda_{\mathcal C}, \Lambda_{\mathcal N} \mid Y, X) 
    & \propto \int p(Y \mid \Lambda_{\mathcal C}, \Lambda_{\mathcal N},  M) p(\Lambda_{\mathcal C}) p(\Lambda_{\mathcal N}) p(M) d M \\
    & \propto \int p(\Lambda_{\mathcal C}, \Lambda_{\mathcal N} \mid Y, M)p(M \mid Y)dM. 
\end{aligned}
\end{equation*}
Bayesian methods generally rely on MCMC sampling or variational inference for posterior computation. MCMC algorithms are computationally expensive, while variational approximations often massively underestimate uncertainty, hindering the reliability of interval estimates. Here, 
to facilitate efficient and accurate posterior approximation, we exploit a pre-training approach. We start with an initial estimate of $M=[\eta_1 ~ \cdots  ~  \eta_n]^\top$ given by $\hat M = \sqrt{n} U$, where $U \in \mathbb R^{n \times k}$ is the matrix of left singular vectors of $Y$ associated with its $k$ leading singular values. This is sometimes referred to as a PCA-estimate and is commonly used in the econometrics literature \citep{bai_03}. Pre-training for efficient posterior approximation in factor models was initially proposed in 
 \citet{fable}, and further developed in \citet{flair, fama, blast}, but none of these approaches can include prior information on gene sets. Intuitively, as the number of variables $p$ increases, the marginal posterior of latent factors concentrates around a point estimate, so ignoring uncertainty in $M$ has negligible impact on accuracy of uncertainty quantification and estimation. 
 \citet{fable} provide a rigorous treatment of such an approach including posterior contraction rates around the true parameter values. 
 This approach is related to joint estimation for factor models \citep{chen_jmle, chen_identfiability, lee24}, which is consistent in double asymptotic scenarios in which both sample size and outcome dimension diverge. 
 Hence, we approximate the posterior distribution of $\Lambda_{\mathcal C}$ and $\Lambda_{\mathcal N}$ via their conditional posterior distribution given the estimate $\hat M$ for $M$ as follows:
\begin{equation*}\label{eq:posterior_approx_1}
     p(\Lambda_{\mathcal C}, \Lambda_{\mathcal N} \mid Y) \approx p(\Lambda_{\mathcal C}, \Lambda_{\mathcal N} \mid Y, \hat M).
\end{equation*}
 
Conditionally on $\hat M$, inference on $\Lambda_{\mathcal C}$ and $\Lambda_{\mathcal N}$ corresponds to surrogate multilinear regression tasks, where $\hat M$ is treated as the observed covariates matrix. Overall, the inferential procedure is efficient and extremely fast,  while producing intervals with correctly calibrated coverage.

Let $Y_{\mathcal C} = Y B_{\mathcal C}$ and $Y_{\mathcal N} = Y B_{\mathcal N}$. It is easy to see that for their $i$-th rows, $y_{\mathcal C i} = B_{\mathcal C}^\top y_i$ and $y_{\mathcal Ni} = B_{\mathcal N}^\top y_i$, the following holds: 
\begin{equation*}
    y_{\mathcal C i} = \Lambda_{\mathcal C} \eta_i + \epsilon_{\mathcal C i}, \quad  y_{\mathcal N i} = \Lambda_{\mathcal N} \eta_i + \epsilon_{\mathcal N i},
\end{equation*}
where $\epsilon_{\mathcal C i} \sim N_q(0, \sigma^2 I_q)$, $\epsilon_{\mathcal N i} \sim N_{p-q}(0, \sigma^2 I_{p-q})$, $y_{\mathcal C i} \independent y_{\mathcal N i}$, and denote the corresponding data matrices with $Y_{\mathcal C}$.
Our prior specification for the elements of $\Lambda$ and $\sigma^2$ allows a conjugate update:
\begin{equation*}\label{eq:posterior_update}\begin{aligned}
\sigma^2 &\mid Y_{\mathcal C}, Y_{\mathcal N}  \sim IG\left(\frac{v_n}{2}, \frac{v_n \sigma_n^2}{2}\right),\\
 \lambda_{\mathcal C l} &\mid  Y_{\mathcal C}, \sigma^2 \sim N_k\left(\mu_{\mathcal Cl},  \frac{\sigma^2 }{n + \tau_{\Gamma}^{-2}} I_k\right), \quad (l = 1, \dots, q),\\
  \lambda_{\mathcal N l'} &\mid  Y_{\mathcal N}, \sigma^2  \sim N_k\left(\mu_{\mathcal N l'},  \frac{\sigma^2 }{n + \tau_{\Psi}^{-2}} I_k\right), \quad (l' = 1, \dots, p-q),\\
\end{aligned}
\end{equation*}
where 
\begin{equation}\label{eq:posterior_params}
    \begin{aligned}
    v_{n} &= v_0 + n,\\
    \sigma_n^2 &= \frac{1}{v_n} \left\{v_0 \sigma_0^2  + \sum_{l=1}^q \big[y_{\mathcal C}^{(l)\top } y_{\mathcal C}^{(l)} -  (n + \tau_{\Gamma}^{-2})\mu_{\mathcal C l}^\top \mu_{\mathcal C l}\big] + \sum_{l'=1}^{p-q} \big[y_{\mathcal N}^{(l')\top } y_{\mathcal N}^{(l')} -  (n + \tau_{\Psi}^{-2})\mu_{\mathcal N l'}^\top \mu_{\mathcal N l'}\big] \right\},\\
         \mu_{\mathcal C l} &=  \left(\hat M^\top \hat M + \tau_{\Gamma}^{-2} I_{k} \right)^{-1}\hat M^{\top} y_{\mathcal C}^{(l)} = \frac{1}{n + \tau_{\Gamma}^{-2}} \hat M^\top y_{\mathcal C}^{(l)}, \\
            \mu_{\mathcal N l'} &=  \left(\hat M^\top \hat M + \tau_{\Psi}^{-2} I_{k} \right)^{-1}\hat M^{\top} y_{\mathcal N}^{(l')} = \frac{1}{n + \tau_{\Psi}^{-2}} \hat M^\top y_{\mathcal N}^{(l')}, \\    
    \end{aligned}
\end{equation}
with $y_{\mathcal C}^{(l)}$ and $y_{\mathcal N}^{(l')}$ denoting the $l$-th and $l'$-th columns of $Y_{\mathcal C}$ and $Y_{\mathcal N}$.  Importantly, this step can be parallelized across columns of $Y_{\mathcal C}$ and $Y_{\mathcal N}$. Moreover, because of conjugacy, we can obtain independent samples from the posterior of the loadings without the need for MCMC. As in other pre-training methods, this approach induces a posterior distribution over components of the covariance matrix that suffers from a mild undercoverage.  We solve this issue by applying a simple coverage correction strategy, as in \citet{fable}, consisting of inflating the posterior variance of $\Lambda_{\mathcal C}$ and $\Lambda_{\mathcal N}$ by the term $\rho^2>1$, which is tuned to achieve valid asymptotic frequentist coverage. The coverage corrected samples for $\Lambda$ are then given by 
\begin{equation}\label{eq:posterior_update_cc}\begin{aligned}
 \lambda_{\mathcal C l} &\mid  Y_{\mathcal C}, \sigma^2 \sim N_k\left(\mu_{\mathcal Cl},  \frac{\sigma^2 \rho^2}{n + \tau_{\Gamma}^{-2}} I_k\right), \quad (l = 1, \dots, q)\\
  \lambda_{\mathcal N l'} &\mid  Y_{\mathcal N}, \sigma^2  \sim N_k\left(\mu_{\mathcal N l'},  \frac{\sigma^2 \rho^2}{n + \tau_{\Psi}^{-2}} I_k\right), \quad (l' = 1, \dots, p-q).\\
\end{aligned}
\end{equation}
The posterior mean for $\Lambda_{\mathcal C}$ and $\Lambda_{\mathcal N}$ are available in closed form: 
\begin{equation}\label{eq:posterior_mean_lambda}\begin{aligned}
 \mathbb E[ \Lambda_{\mathcal C} \mid -] &= \begin{bmatrix}\mu_{\mathcal C 1}  ~ \cdots  ~\mu_{\mathcal C q} 
 \end{bmatrix}^\top = \frac{\sqrt{n}}{n +  \tau_{\Gamma}^{-2}} Y_{\mathcal C}^\top U = \frac{\sqrt{n}}{n +  \tau_{\Gamma}^{-2}} B_{\mathcal C}^\top V D,\\
  \mathbb E[ \Lambda_{\mathcal N} \mid -] &= \begin{bmatrix}\mu_{\mathcal N 1}  ~ \cdots  ~\mu_{\mathcal N p-q} 
 \end{bmatrix}^\top = \frac{\sqrt{n}}{n +  \tau_{\Psi}^{-2}} Y_{\mathcal N}^\top U = \frac{\sqrt{n}}{n +  \tau_{\Psi}^{-2}} B_{\mathcal N}^\top V D,
\end{aligned}
\end{equation}
and the induced posterior means for $\Gamma$ and $\Psi$ are 
\begin{equation}\label{eq:posterior_mean_lambda_2}
\begin{aligned}
     \mathbb E[\Gamma \mid -] &= \frac{\sqrt{n}}{n + \hat \tau_{\Gamma}^{-2}} (C^\top C)^{-1}C^{\top} V D,\\
      \mathbb E[\Psi \mid -] &= \frac{\sqrt{n}}{n + \hat \tau_{\Psi}^{-2}}(I - P_C) V D,
\end{aligned}   
    \end{equation} 
    where  
$D = \text{diag}(d_1, \dots, d_k)$, with $d_l$ being the $l$-th largest singular value of $Y$, and $V \in \mathbb R^{p\times k}$ is the matrix of the associated right singular vectors of $Y$. It is easy to show that the induced posterior mean on the covariance matrix is given by
\begin{equation}\label{eq:posterior_mean_Lambda_outer}
    \mathbb E[\Lambda \Lambda^\top + \sigma^2 I_p \mid -] = \bar \Lambda \bar \Lambda^\top + \sigma_n^2 \frac{v_n}{v_n -2} \bigg\{1 + \rho^2\Big(\frac{1}{n + \tau_\Gamma^{-2}} + \frac{1}{n + \tau_\Psi^{-2}} \Big)\bigg\} I_p,
\end{equation}
where $\bar \Lambda = \mathbb E[\Lambda \mid -] = C\mathbb E[\Gamma \mid -] + \mathbb E[\Psi \mid -]$.
Given a sample for $\Lambda_{\mathcal C}$ and $\Lambda_{\mathcal N}$, we obtain a sample for $\Lambda $ via \eqref{eq:loads2}. These samples are drawn independently, substantially improving the computational efficiency over MCMC. Posterior summaries for any functional of the model parameters, including point and interval estimates, can be calculated using such samples. For instance, for given samples of $\Lambda$ and $\sigma$, we can sample the latent factors of the $i$-th sample (the $i$-th row of $M$), via 
\begin{equation*}
    \eta_i \mid y_i, \Lambda, \sigma^2 \sim N_k\bigg( \big(\Lambda^\top \Lambda + \sigma^2I_k\big)^{-1}\Lambda^\top y_i,   \big(\frac{1}{\sigma^2}\Lambda^\top \Lambda + I_k\big)^{-1} \bigg).
\end{equation*}
Given $N_{MC}$ samples of $\Lambda$ and $\sigma^2$, $\{\Lambda^{(s)}\}_{s=1}^{N_{MC}}, \{\sigma^{2(s)}\}_{s=1}^{N_{MC}}$, we can approximate the posterior mean for $\eta_i$ via
\begin{equation*}
    \mathbb E[\eta_i \mid y_i] \approx \frac{1}{N_{MC}}\sum_{s=1}^{N_{MC}}   \big(\Lambda^{(s)\top} \Lambda^{(s)} + \sigma^{2(s)} I_k\big)^{-1}\Lambda^{(s)\top} y_i.
\end{equation*}

The complete procedure to obtain point and intervals estimates is reported in Algorithm \ref{alg:basil} in the supplementary material.

\subsection{Hyperparameter tuning}\label{subsec:hyperparameters}

This section describes how the hyperparameters of BASIL are selected. In this regard BASIL offers a major advantage over PLIER \citep{mao2019pathway} by avoiding the need for intensive hyperparameter tuning and making the inference process more efficient, stable, and user-friendly. 
As a criterion to select the number of factors, we use the joint likelihood based information criterion (JIC) of \citet{chen_jic},
\begin{equation}\label{eq:JIC}
    \text{JIC}(k) = -2 l_{k} + k \max(n,p) \log \{\min(n,p)\},
\end{equation}
where $l_{k}$ is the value of the joint log-likelihood computed at the joint maximum likelihood estimate for latent factors and factor loadings when the latent dimension is equal to $k$. To avoid the computation of joint maximum likelihood estimates for each value of $k$, we approximate $l_{k}$ with $ l_{k} \approx \hat l_{k} $, where $\hat l_{k}$ is the likelihood for the study-specific data matrix under spectral estimates, i.e. the product of the estimates for latent factors and loadings correspond to the best rank $k$ approximation to $Y$. 

For the prior shrinkage coefficients, note that conditionally on $\sigma^2$, the \textit{a priori} expected values of the Frobenius norm squared of $ \Lambda_{\mathcal C}$ and $ \Lambda_{\mathcal N}$ are $\mathbb E[||\Lambda_{\mathcal C}||_F^2 \mid \sigma^2] = \tau_{\Gamma}^2 \sigma^2 qk$ and $\mathbb E[||\Lambda_{\mathcal N}||_F^2 \mid \sigma^2] = \tau_{\Psi}^2 \sigma^2 (p-q)k$.
We let $L_{\mathcal C}= || P_C V D||_F^2 /n$, $L_{\mathcal N}= || (I_p - P_C) V D||_F^2 /n$, and $\hat \sigma^2 = ||(I_n - U U^{\top}) Y ||_2^2/[(n - k) p]$, which are consistent estimates of $||\Lambda_{\mathcal C}||_F^2$, $||\Lambda_{\mathcal N}||_F^2$, and $\sigma^2$, respectively. Thus, we set $\tau_{\Gamma}^2$ and $\tau_{\Psi}^2$ to $\hat \tau_{\Gamma}^2 = \frac{L_{\mathcal C}}{k q \hat \sigma^2}$ and  $\hat \tau_{\Psi}^2 = \frac{L_{\mathcal N}}{k (p-q) \hat \sigma^2}$, respectively. We discuss the theoretical properties for such hyperparameter values in the next section.
For $\rho$, we take inspiration from \citet{fable}, and for $f,g = 1, \dots, p$, we let \begin{equation*} b_{ fg} = \begin{cases}    \big\{ 1 +  \frac{||\mu_{ f}||_2^2 ||\mu_{ g} ||_2^2 + (\mu_{\mathcal C f}^\top\mu_{ g})^2 }{ \hat \sigma^2 (||\mu_{ f}||_2^2 + ||\mu_{ g}||_2^2)} \big\}^{1/2}, \quad &\text{if } f \neq g  \\   \big(1 + \frac{ ||\mu_{ f} ||_2^2 }{ 2\hat \sigma^2 }\big)^{1/2}, \quad &\text{otherwise,}         \end{cases}
\end{equation*}
where $\mu_j$ is the $j$-th row of $\mathbb{E}[\Lambda \mid -] = C  \mathbb E[\Gamma \mid -] +  \mathbb E[\Psi \mid -]$.
Then, we set $\rho= {q \choose 2}^{-1} \sum_{1 \leq f \leq g \leq p}  b_{fg}$, which allows control of the coverage on average across entries. Finally, we set $v_0=1$ and $\sigma_0^2 = 1$.

\subsection{Theoretical support}\label{eq:theory}

In this section, we illustrate some benefits of the double shrinkage approach described in Section \ref{subsec:prior}. In particular, we show that in the limiting cases when only one source of variation (either the known or the novel pathways) is active, the variance parameter of the other component converges to $0$. This allows the model to collapse on a simpler specification where only the non-superfluous component is included. We accompany these theoretical results with a simulation study demonstrating the adaptivity of the shrinkage parameters in intermediate regimes where both sources of variation contribute (see Figure \ref{fig:low_high}).

We start by introducing the notation. For a matrix $A$, we denote by 
$s_l(A)$ its $l$-th largest singular value. 
Moreover, for two sequences $(a_n)_{n \leq 1}$, $(b_n)_{n \leq 1}$, we say $a_n \lesssim b_n$ if there exist two constants $N_0 < \infty$ and $C < \infty$, such that $a_n \leq C b_n$ for every $n > N_0$. We say $a_n \asymp b_n$ if and only if $a_n \lesssim b_n$ and $b_n \lesssim a_n$.

We enumerate a few regularity conditions. 
\begin{assumption}\label{assumption:data}
    Data are generated under model \eqref{eq:factor1}, with true loading matrix $\Lambda_0 = C \Gamma_0 + \Psi_0$, 
    and error variance $\sigma_0^2$. Moreover, we have $C \Gamma_0 \perp \Psi_0$.     
\end{assumption}
The condition $C \Gamma_0 \perp \Psi_0$ in Assumption \ref{assumption:data} is needed to ensure the identifiability of the parameters $\Gamma_0$ and $\Psi_0$. Indeed if such a condition was not imposed, for some given values of the parameters, $\Gamma_0$ and $\Psi_0$, one could let $\tilde \Gamma_0 = 0$ and $\tilde \Psi_0 = \Psi_0 + C \Gamma_0$. It is easy to verify that $(\Gamma_0, \Psi_0)$ and $(\tilde \Gamma_0, \tilde \Psi_0)$ induce the same distribution for $y$. Imposing this condition resolves this non-identifiability due to the uniqueness of the decomposition of vectors as a sum of elements lying in orthogonal subspaces.

\begin{assumption}\label{eq:dimension}
    We let $p = p_n \to \infty$ and $\log(p_n )/n = \mathcal O(1)$ as $n \to \infty$. \end{assumption} 
Assumption \ref{eq:dimension} requires the dimensionality to grow with the sample size no faster than a polynomial rate. Considering a growing dimension is reasonable since in genomics we typically have $p \gg n$. 

\begin{assumption}\label{assumption:Lambda}
    We have $s_l(\Lambda_0)\asymp \sqrt{p} $, which implies that either $s_l(C \Gamma_0)  \asymp \sqrt{p} $ or $s_l(\Psi_0) \asymp \sqrt{p}$, for $l=1, \dots, k$.
\end{assumption}
Assumption \ref{assumption:Lambda} allows the systematic component of the variation to be distinguished from pure noise.
\begin{assumption}\label{assumption:hyperparameters}
    The number of latent factors $k$ is known.
\end{assumption}
While this Assumption may not hold in practice, the JIC defined in \eqref{eq:JIC} has been shown to provide consistent estimates for $k$ under mild assumptions \citep{chen_jic}. We note that in the numerical experiments we do not use any knowledge about the true number of latent factors and obtain good estimation accuracy. 

We characterize the behaviour of the prior shrinkage parameters in two special cases, that is when all the variation is explained by either known or new pathways.
\begin{theorem} \label{thm:prior_variances}
    Under Assumptions \ref{assumption:data} to \ref{assumption:hyperparameters}, as $n \to \infty$, if $\Psi_0 = 0$, or, equivalently, there are no new pathways active, $\hat \tau_{\Psi} \to 0$ in probability, and, if $\Gamma_0 = 0$, or, equivalently, there are no known pathways active, $\hat \tau_{\Gamma} \to 0$ in probability.
\end{theorem}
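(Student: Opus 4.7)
My plan is to exploit the orthogonality condition $C\Gamma_0 \perp \Psi_0$ from Assumption \ref{assumption:data} together with spectral consistency of the PCA estimator, as developed in \citet{fable}. The key identity is that in the case $\Psi_0 = 0$, $\Lambda_0 = C\Gamma_0$ gives $\Lambda_0^\top B_{\mathcal N} = 0$ (since $C^\top B_{\mathcal N} = 0$), so $Y B_{\mathcal N} = E B_{\mathcal N}$ is pure noise; symmetrically, under $\Gamma_0 = 0$ the identifiability condition forces $\Psi_0^\top B_{\mathcal C} = 0$, giving $Y B_{\mathcal C} = E B_{\mathcal C}$. Using $VD = Y^\top U$ and the fact that $\|P_C X\|_F = \|B_{\mathcal C}^\top X\|_F$ by isometry of the orthonormal basis, I can rewrite the two statistics of interest as $L_{\mathcal N} = \|(EB_{\mathcal N})^\top U\|_F^2/n$ (under $\Psi_0 = 0$) and $L_{\mathcal C} = \|(EB_{\mathcal C})^\top U\|_F^2/n$ (under $\Gamma_0 = 0$).

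The next step is to decouple $U$ from the pure-noise matrix. Let $U^*$ denote the top-$k$ left singular vectors of the noise-free signal $M\Lambda_0^\top$. Because the spike singular values satisfy $s_l(\Lambda_0)\asymp\sqrt p$ (Assumption \ref{assumption:Lambda}) while the noise operator norm is $O_P(\sqrt{n+p})$, a Davis–Kahan/Wedin perturbation argument yields an orthogonal $O$ with $\|U - U^*O\|_F = O_P(\sqrt{k/n + k/p})$. Crucially, under $\Gamma_0 = 0$ the blocks $EB_{\mathcal C}$ and $EB_{\mathcal N}$ are independent (orthogonality of $B_{\mathcal C}$, $B_{\mathcal N}$ and Gaussianity of $E$), and $U^*$ is a function of $M$ and $\Psi_0$ alone, so $EB_{\mathcal C}$ is independent of $U^*$. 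Conditional on $U^*$, the matrix $(EB_{\mathcal C})^\top U^*$ is therefore $q\times k$ with iid $N(0,\sigma_0^2)$ entries, and chi-square concentration gives $\|(EB_{\mathcal C})^\top U^*\|_F^2 = O_P(qk)$. The decoupling residual satisfies $\|(EB_{\mathcal C})^\top (U - U^*O)\|_F^2 \leq \|EB_{\mathcal C}\|_{op}^2\|U-U^*O\|_F^2 = O_P((n+q)(k/n + k/p))$, a lower-order term. Assembling these bounds yields $L_{\mathcal C} = O_P(qk/n + k + kq/p)$, and dividing by $kq\hat\sigma^2$ — with $\hat\sigma^2 \to \sigma_0^2$ in probability by standard concentration of $\|(I_n - UU^\top)Y\|_F^2/[(n-k)p]$ — produces $\hat\tau_\Gamma^2 = o_P(1)$. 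A symmetric computation with $q$ replaced by $p-q$ and $B_{\mathcal C}$ by $B_{\mathcal N}$ delivers $\hat\tau_\Psi^2 = o_P(1)$ under $\Psi_0 = 0$.

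The main obstacle will be this decoupling step. In the regime $p \gg n$ permitted by Assumption \ref{eq:dimension}, the obvious operator-norm bound $\|(EB_{\mathcal C})^\top U\|_F \leq \|E\|_{op}\sqrt k$ is too crude, yielding $L_{\mathcal C}/q$ of order $1 + p/n$, which need not vanish. Replacing $U$ by the noise-independent proxy $U^*$ and invoking the strong-signal condition $s_k(\Lambda_0)\asymp\sqrt p$ to obtain a sharp Davis–Kahan bound is what unlocks the desired rate, and carefully handling the orthogonal ambiguity in $U^*$ — together with conditioning correctly so that $EB_{\mathcal C}$ really acts as Gaussian noise against a deterministic-by-conditioning $U^*$ — is the principal technical care required.
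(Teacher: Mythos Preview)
Your approach is correct and closely parallels the paper's: both bound the relevant numerator by replacing the empirical singular vectors $U$ with the noise-free $U_0$ (your $U^*$), invoke Gaussian concentration on $B^\top E^\top U_0$ (a small iid Gaussian matrix since $U_0$ depends only on $M,\Lambda_0$), and control the discrepancy via spectral perturbation. The paper does this through a more general intermediate result---its Proposition~6 shows $L_{\mathcal C}=\|C\Gamma_0\|_F^2+r_{\mathcal C,n}$ and $L_{\mathcal N}=\|\Psi_0\|_F^2+r_{\mathcal N,n}$ with explicit remainders, working at the projection level $UU^\top-U_0U_0^\top$ and citing the sharper bound $\|UU^\top-U_0U_0^\top\|\lesssim 1/n+1/p$ from \citet{fable}---and obtains the theorem as a corollary, whereas you attack the two limiting cases directly; the underlying technical content is the same. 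Two minor corrections: your detour via the block independence $EB_{\mathcal C}\perp EB_{\mathcal N}$ is unnecessary, since $U^*$ is a function of $(M,\Lambda_0)$ alone and is therefore independent of \emph{all} of $E$ regardless; and your assembled bound $L_{\mathcal C}=O_P(qk/n+k+kq/p)$ drops a division by $n$ on the decoupling residual---the correct order is $L_{\mathcal C}=O_P(qk/n+k/p)$, which upon dividing by $kq\hat\sigma^2$ yields $\hat\tau_\Gamma^2=O_P(1/n+1/(qp))=o_P(1)$ without tacitly requiring $q\to\infty$.
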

Theorem \ref{thm:prior_variances} characterizes the behavior of BASIL in two specific scenarios. When there are no new pathways active and all the systematic variability is explained by known pathways, or, formally, $\Psi_0 = 0$, the prior variance parameter controlling the shrinkage of $\Psi$ converges to $0$, and BASIL collapses on the simpler model $\Lambda = C \Gamma$. Analogously, when all variability is explained by new pathways, $\Gamma_0 = 0$, the prior variance parameter controlling the shrinkage of $\Gamma$ converges to $0$, and BASIL collapses on the simpler model $\Lambda = \Psi$. Therefore, our procedure adaptively selects the amount of shrinkage to apply to structured and unstructured variability and chooses simpler model specifications when adequate.

\begin{figure}[t]
\begin{center}
    \includegraphics[width=0.6\textwidth]{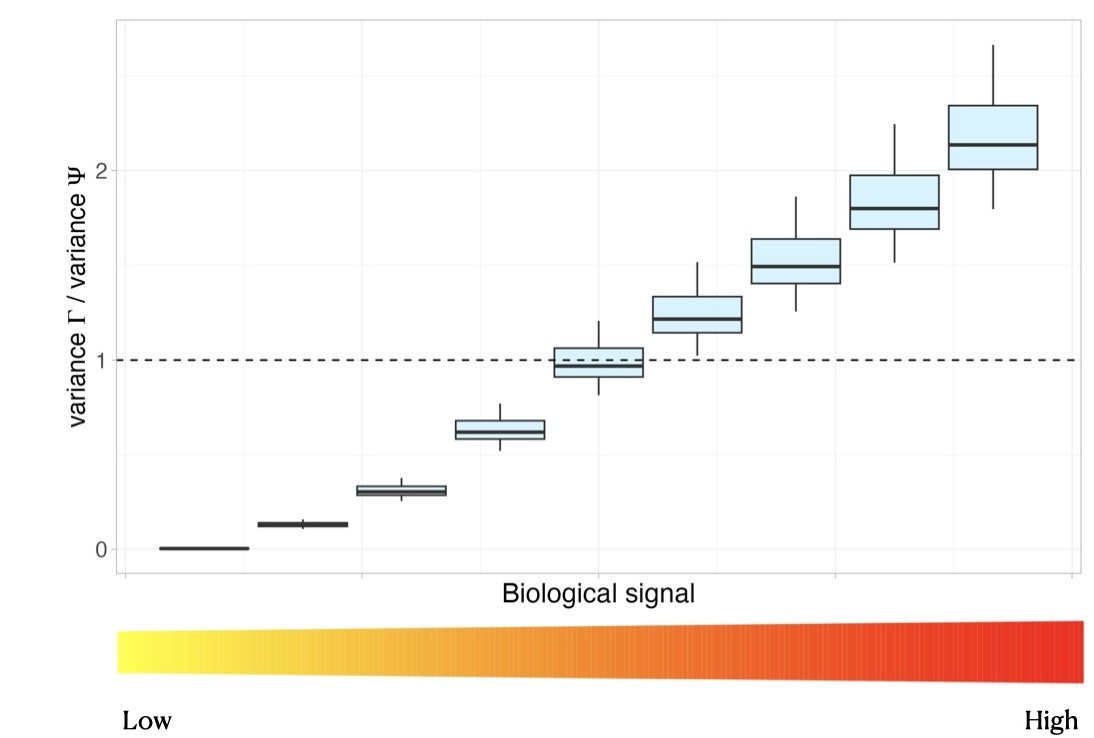} 
    \caption{Ratio between the estimated variances for $\Gamma$ and $\Psi$ as the ontology signal increases. The plot shows the results over $25$ replications for each setting. }
    \label{fig:low_high}
\end{center}
\end{figure}

The adaptive shrinkage behavior of BASIL is also evident in simulations that explore scenarios between the two extreme cases described above. We simulate synthetic data under the BASIL model \eqref{eq:factor1}–\eqref{eq:loads}, fixing $\tau_{\Psi}^2$ while varying $\tau_{\Gamma}^2$ to generate datasets with different levels of biological signal.
The ratio between $\tau_{\Gamma}^2$ and $\tau_{\Psi}^2$ serves as an indicator of how informative gene ontology knowledge is in the latent factor model. A ratio of less than one indicates substantial unstructured variability, while a ratio above one implies that the majority of the signal in the data aligns to known gene ontologies. 
As shown in Figure~\ref{fig:low_high}, BASIL’s estimates of this ratio increase as the ontology-driven signal becomes stronger, illustrating its ability to adapt to the true underlying structure. These results confirm that BASIL automatically modulates the degree of shrinkage applied to structured and unstructured components, consistent with the theoretical behavior described in Theorem~\ref{thm:prior_variances}.

\section{Simulation study}\label{sec:simulations}

 We evaluated the performance of BASIL compared to state of the art methods across several metrics: estimation accuracy, computational efficiency, uncertainty quantification, and parameter interpretability. 
We simulated gene expression data in two different scenarios. The first, referred to as `high ontology signal', reflects a setting in which the variability in the loadings matrix $\Lambda$ is primarily driven by gene ontology information, with low unstructured variability. The second scenario,  referred to as `low ontology signal',  is dominated by unstructured variability, accounting for the fact
that biological signal in many genes is not fully characterized by known gene sets. The former represents an ideal case where known gene sets strongly inform latent gene associations, while the latter captures realistic settings where gene ontology information may be incomplete or inaccurate. 
In practice, this is implemented by assigning different values to the variance parameters associated with the structured ($\Gamma$) and unstructured ($\Psi$) components of the gene loadings matrix.
Specifically, we simulate data under the BASIL model \eqref{eq:factor1}-\eqref{eq:loads} and set $\tau_{\Gamma}^2=0.7$, $\tau_{\Psi}^2=0.1$ in the `high ontology signal' scenario and  $\tau_{\Gamma}^2=0.4$, $\tau_{\Psi}^2=0.7$ in the `low ontology signal' scenario. 
In both settings, we considered two different values for the number of genes - $1,000$ and $3,000$ -  $10$ latent factors, $500$ samples and set $\sigma^2=15$.
For each configuration, we reproduce the experiments 25 times. We obtain the matrix of gene sets $C$ using the \texttt{msigdbr} R package that provides access to gene sets from the Molecular Signatures Database, employing the molecular function subcategory of the gene ontology category. 

To assess the performance of BASIL in estimation accuracy, we evaluate the error in reconstructing 
$\Lambda_0 \Lambda_0^{\top}$ via the Frobenius norm of the difference of the estimate and true parameter scaled by $||\Lambda_0 \Lambda_0^{\top}||$. 
For comparison, we also ran PLIER \citep{mao2019pathway} and ROTATE \citep{rovckova2016fast}, a computationally efficient factor model based on an expectation-maximization (EM) algorithm that cannot incorporate information on gene ontology.
For PLIER, we used the default settings for hyperparameter selection and estimated the low rank component of the gene covariance matrix with 
$(\hat \Lambda (\hat M^\top \hat M)\hat \Lambda^{\top})/n$, where $\hat \Lambda$ and $\hat M$ are estimates for loadings and latent factors, respectively. Since ROTATE does not provide a default mechanism for selecting the number of factors, we set it to the true number, thereby giving this method an advantage. The performance of ROTATE tended to degrade when the number of factors was set at a value greater than $k$ in our preliminary experiments.

BASIL outperforms competitors in all scenarios (Figure~\ref{fig:sim1}, panel a) in recovering the true low-rank components of the gene covariance matrix $\Lambda_0 \Lambda_0^{\top}$, showing attractive results in latent factor inference. In addition to being very accurate, BASIL is also fast to run, especially compared to PLIER, as shown in Figure~\ref{fig:sim1}, panel b. Indeed, BASIL runs in a few seconds on the simulated datasets and is approximately 30 times faster than PLIER on average for our experiments.

\begin{figure}[t]
\begin{center}
    \includegraphics[width=0.99\textwidth]{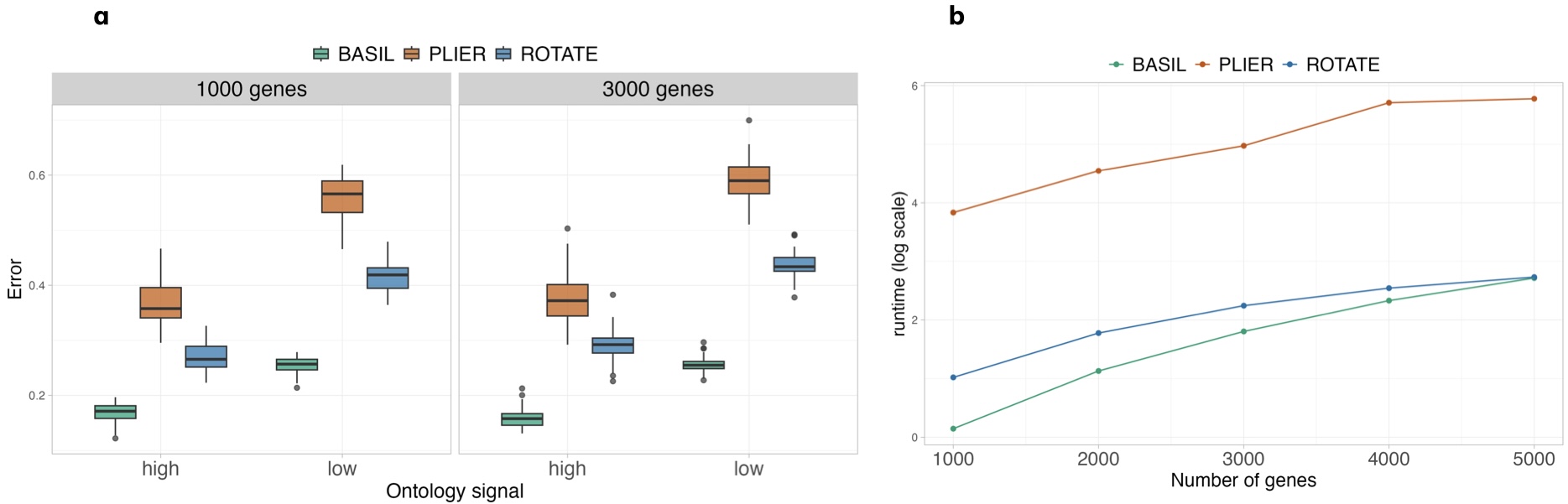} 
    \caption{ Panel a: performance of BASIL, ROTATE and PLIER in estimation accuracy for the gene co-expression matrix in low and high ontology signal scenarios over the $25$ replications.  Panel b: mean runtime over $25$ replications of BASIL, ROTATE and PLIER with datasets simulated under different values for the number of genes in the `high ontology signal' scenario. The results are in seconds on a logarithmic scale.}
    \label{fig:sim1}
\end{center}
\end{figure}

A critical issue in low-rank matrix factorization is the choice of the latent dimension, i.e. the number of factors. In our experiments, BASIL consistently recovered the true number of factors in all scenarios, while PLIER significantly overestimated it, as shown in Table \ref{tab:ksim}. This tendency toward overestimation not only introduces bias in factor estimation but also results in less parsimonious models, making the biological interpretation of the factors more challenging.

To assess BASIL's performance in uncertainty quantification, we evaluate the frequentist coverage of the posterior credible intervals for $\Lambda\Lambda^{\top}$. Specifically, we consider the frequentist coverage of 95\% credible intervals on average across the entries of a $200 \times 200$ submatrix of $\Lambda\Lambda^{\top}$.
We recall that both PLIER and ROTATE produce only point estimates of the gene covariance matrix, without offering any means to quantify uncertainty. For this reason, we exclude them from the uncertainty quantification experiments. The results are shown in Table \ref{tab:ksim}. 
Across all scenarios, the average coverage of the entrywise intervals is close to the nominal value 0.95, providing support for BASIL in terms of producing well-calibrated credible intervals.

\begin{table}[t]
\centering
\caption{BASIL and PLIER estimations of the number of latent factors $k$ across the two scenarios. For BASIL we also include frequentist coverage of 95\% credible intervals. We report mean and standard errors over 25 replications. 
}
\vspace{0.3cm}

\begin{tabular}{c|cc|cc}
\multicolumn{1}{c}{} &
\multicolumn{2}{c}{High bio signal} &
\multicolumn{2}{c}{Low bio signal} \\
& $p=1000$ & $p=3000$ & $p=1000$ & $p=3000$ \\
\hline
\multicolumn{5}{c}{\textit{Estimated $k$ }} \\
\hline
BASIL & 10 (0) & 10 (0) & 10 (0) & 10 (0) \\
PLIER &  69 (27) & 51 (16)& 61 (24) & 58 (24)\\
\hline
\multicolumn{5}{c}{\textit{Coverage — BASIL only}} \\
\hline
BASIL & 0.93 (0.02) & 0.93 (0.02) & 0.93 (0.01) & 0.93 (0.01) \\
\hline
\end{tabular}

\label{tab:ksim}
\end{table}

\section{Application to whole-blood RNA-seq data} \label{sec:WBanalysis}

We apply BASIL to the validation dataset provided in \cite{mao2019pathway}, which comprises human whole-blood samples assayed by both RNA sequencing and direct mass cytometry measurement (using CyTOF) of cell-type proportion.
The dataset consists of 36 samples for 5892 genes. The pathway matrix $C$ contains 606 pathways that included 61 cell-type markers and 555 canonical pathways from the Molecular Signatures Database (MSigDB). For details on sample processing and RNA sequencing methods refer to \cite{mao2019pathway}.  Data are standardized to have mean zero and variance 1 for each column (gene). The criterion in \eqref{eq:JIC} selects $k=8$ factors and the estimates of the shrinkage parameters $\tau_{\Gamma}^2$ and $\tau_{\Psi}^2$ are 0.82 and 0.29, respectively, suggesting that pathway information plays an important role and should be considered in the model.

 The high dimensionality of the dataset combined with the low sample size induces non-negligible uncertainty in the inferred gene-gene covariance matrix. Methods like PLIER neglect such uncertainty, while BASIL offers a way to seamlessly focus on relevant associations, and exclude spurious ones, via the posterior distribution of quantities of interest. For example, Figure \ref{fig:appWB}a shows the empirical correlation matrix for the subset of the 100 genes with the largest empirical variance and the estimates of BASIL and PLIER. In the third panel, we display the BASIL posterior mean of the correlation matrix, after zeroing out the entries for which the corresponding $95\%$ credible interval included zero. Roughly $85\%$ of the possible pairs of genes did not display a statistically significant correlation value. This dramatically decreases the number of associations to be considered, facilitating downstream analysis and interpretability. 
 
\begin{figure}[t!]
\begin{center}
    \includegraphics[width=0.9\textwidth]{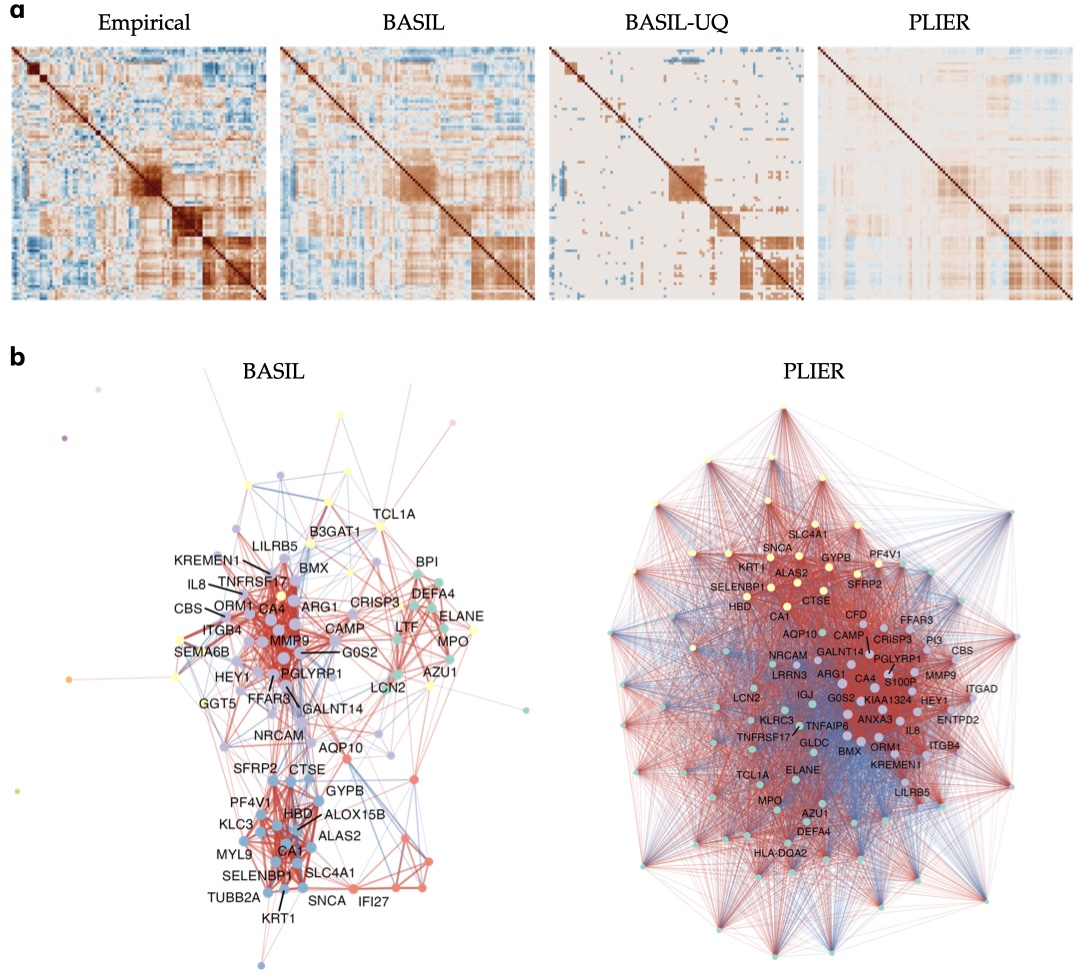} 
    \caption{Panel a: empirical correlation matrix for a subset of 100 genes and correlation matrices estimated with BASIL and PLIER. The two middle panels show the BASIL posterior mean of the correlation matrix without and with zeroing out the entries for which the corresponding $95\%$ credible interval included 0. Panel b: gene network plot for BASIL including only statistically significant correlations and PLIER. Positive (negative) correlations are shown in red (blue). }
    \label{fig:appWB}
\end{center}
\end{figure}
 
We show the resulting gene network in Figure~\ref{fig:appWB}b. 
BASIL identifies two large clusters  that contain genes associated with neutrophil granules and activation. The first includes genes such as arginase-1 (ARG1), matrix metalloproteinase-9 (MMP9), and peptidoglycan recognition protein 1 (PGLYRP1), which are known markers of neutrophil function \citep{yang_et_al_20}. The second group includes neutrophil elastase (ELANE), myeloperoxidase (MPO), azurocidin (AZU1) and defensin $\alpha4$ (DEFA4), reflecting the co-expression of azurophilic granule proteins in neutrophils \citep{calzetti_et_al_22}. These genes have previously been identified as a coordinated module under different conditions \citep{hemmat_et_al_20}. 
There is also a third cluster with genes specific to erythrocytes and erythropoiesis, such as erythroid 5'-aminolevulinate synthase (ALAS2), glycophorin B (GYPB) and erythrocyte membrane anion exchanger (SLC4A1). These genes are well known to be co-expressed in leukocyte-derived data reflecting erythropoietic adaptation \citep{Trudel2024}.
While BASIL yields a sparse network with three well-separated,  biologically interpretable clusters, PLIER's network  is considerably denser, with numerous weak or spurious correlations obscuring the underlying  modular structure.
This demonstrates BASIL's superior ability to separate signal from noise,  resulting in a sparse network that enhances biological interpretability and facilitates the 
identification of functionally relevant gene modules.

We refer to Section \ref{sec:oos} of the supplementary material for experiments that compare the out-of-sample predictive accuracy of competing methods, where BASIL displays state-of-the-art performance.

\section{Analysis of global fever data}

\subsection{Data description}

Infectious diseases remain a major cause of morbidity and mortality worldwide, and patients often present with clinically similar febrile syndromes for which rapid, etiology-specific diagnostics are unavailable. This uncertainty can lead to inappropriate antibiotic use, thereby accelerating antimicrobial resistance \citep{naghavi2022global}.  Motivated by the need for timely, pathogen-independent tools to support clinical decision-making and antibiotic stewardship, \cite{ko2023host} generated host whole-blood transcriptomic profiles that capture coordinated gene-expression programs induced by infection and inflammation. The resulting dataset comprises transcriptomic data from patients with confirmed viral or bacterial infections of varying etiologies, as well as from individuals with non-infectious disease mimics.
Because many of these responses are organized along known immunological pathways (e.g., interferon signaling, neutrophil activation, acute-phase response), incorporating curated gene sets can improve interpretability and stabilize inference in high-dimensional settings. This motivates analyzing the global fever cohort using a pathway-informed latent factor model that captures shared biological modules underlying variation in whole-blood expression while accommodating substantial heterogeneity across pathogens, clinical presentations, and study sites. The dataset is publicly available through the Gene Expression Omnibus (GEO) under accession number GSE211567 \citep{ko2023host}.

The global fever cohort included 101 participants with bacterial infections (41 bloodstream infections and 60 bacterial zoonoses), 123 with viral infections (80 respiratory and 43 dengue), and 67 with non-infectious illnesses (e.g., pulmonary embolism, congestive heart failure, COPD/asthma, cancer, and autoimmune disorders) \citep{ko2023host}. The study included 291 patients recruited in the United States (150) and Sri Lanka (141) and ensured a balanced representation across demographic and clinical subgroups.  Pathogens stratified by country are reported in Table \ref{tab:pathogen_country_counts} in the supplementary material.
RNA was extracted from whole-blood samples and sequenced to measure gene expression in two batches, with batch effects corrected for by using overlapping samples. The authors filtered out low expression genes and normalized the data, and the resulting dataset included 14,386 genes.

To construct a gene set matrix $C$, we first mapped RefSeq transcript identifiers from the GSE211567 normalized expression data to Ensembl gene identifiers using the org.Hs.eg.db annotation database. We retrieved Gene Ontology Molecular Function (GO:MF) gene sets from the Molecular Signatures Database (MSigDB) for Homo sapiens (collection C5) and merged these annotations with our identifier mappings. We then generated a binary gene sets matrix where rows represent genes and columns represent gene sets, with entries indicating gene membership (1) or non-membership (0) in each set. To ensure robust downstream analysis, we filtered out pathways with fewer than 10 genes and genes that were not annotated to any gene set.

\subsection{Results}
We apply BASIL to the global fever data.
The criterion introduced in \eqref{eq:JIC} selects 19 factors and the ratio between $\tau_{\Gamma}^2$ and $\tau_{\Psi}^2$ is 8.2, suggesting that gene ontology contributes substantially in modeling the gene covariance. Figure~\ref{fig:method} compares the correlation matrices estimated by BASIL and PLIER with the empirical correlation matrix.  BASIL's estimates are more accurate, particularly for the negative coefficients, providing more robust and reliable results.

The structure introduced by BASIL on the loadings matrix enables direct interpretation in terms of gene ontology. Specifically, the graphical matrix $\Gamma$ links latent factors to biological pathways, allowing functional explanation of factors using established gene ontology terms. To focus on statistically significant associations, we applied the BASIL uncertainty quantification procedure  and set all entries in $\Gamma$ to zero whose $95\%$ credible intervals included zero (about $40\%$ of the matrix entries). 
Our analysis concentrates on the first ten factors, representing the components that explain the largest proportion of variability in the data. The results are shown in Figure~\ref{fig:appGF}a, displaying for each factor the five pathways with the largest absolute coefficients.

The inferred factors define interpretable axes of molecular function that delineate signaling, transport, and immune regulatory pathways. The most prominent signal is phosphoinositide phosphatase activity: three factors (Factors 1, 2 and 7) are dominated by gene ontology terms for dephosphorylation of phosphatidylinositol derivatives, indicating that variation in phosphoinositide turnover is a major driver of expression heterogeneity. 
Beyond these core signaling modules, several factors capture structured trade-offs between functional programs. Factor 4 showed an innate-immune signature, linking high ATPase-coupled ion transport with low Toll-like receptor 4 (TLR4) binding activity, which mediates recognition of lipopolysaccharide (LPS) and triggers inflammatory responses \citep{luo2025examination, park2013recognition}.
This immune-versus-transport contrast likely reflects cellular heterogeneity between inflammatory and non-immune cell populations in blood. This factor exemplifies the critical value of allowing both positive and negative loadings to capture biologically meaningful functional oppositions. Similarly, Factor 2 captured an opposing relationship between membrane transporters (including ABC-type glutathione S-conjugate transporters) and phosphoinositide phosphatases, while Factor 9 contrasts phosphoinositide metabolism with macromolecular transport. Such bidirectional patterns, which reflect the coordinated activation and suppression of complementary processes, would be obscured in approaches that constrain all loadings to be positive, such as PLIER. Additional factors captured more specialized cellular functions. Factor 8 was enriched for membrane-associated processes, including binding of cardiolipin and phosphatidylglycerol - lipids enriched in mitochondrial and bacterial membranes - coupled with ion channel regulation and transporter activity, potentially reflecting mitochondrial stress responses or membrane remodeling. Factors 3, 5, and 10 lack a clear unifying theme, as their top gene ontology terms span disparate functions.

\begin{figure}[t!]
\begin{center}
 \includegraphics[width=0.74\textwidth]{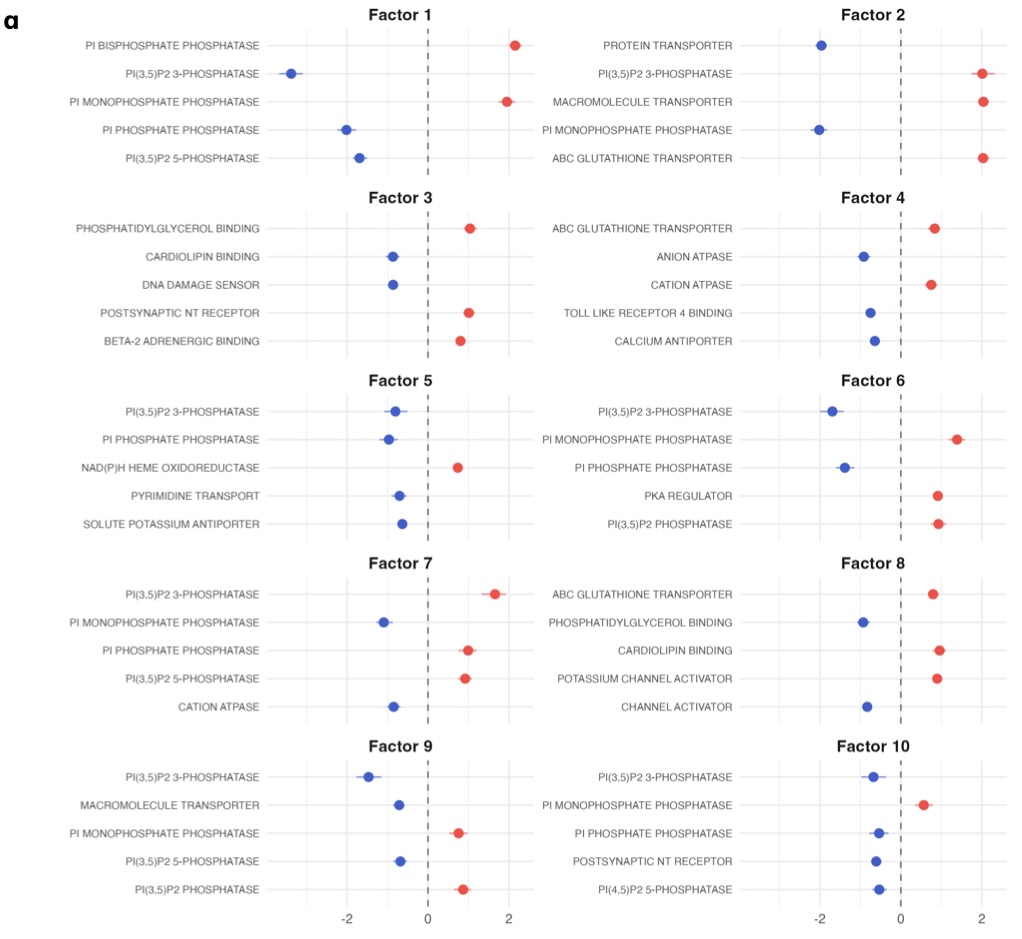} 
    \includegraphics[width=0.74\textwidth]{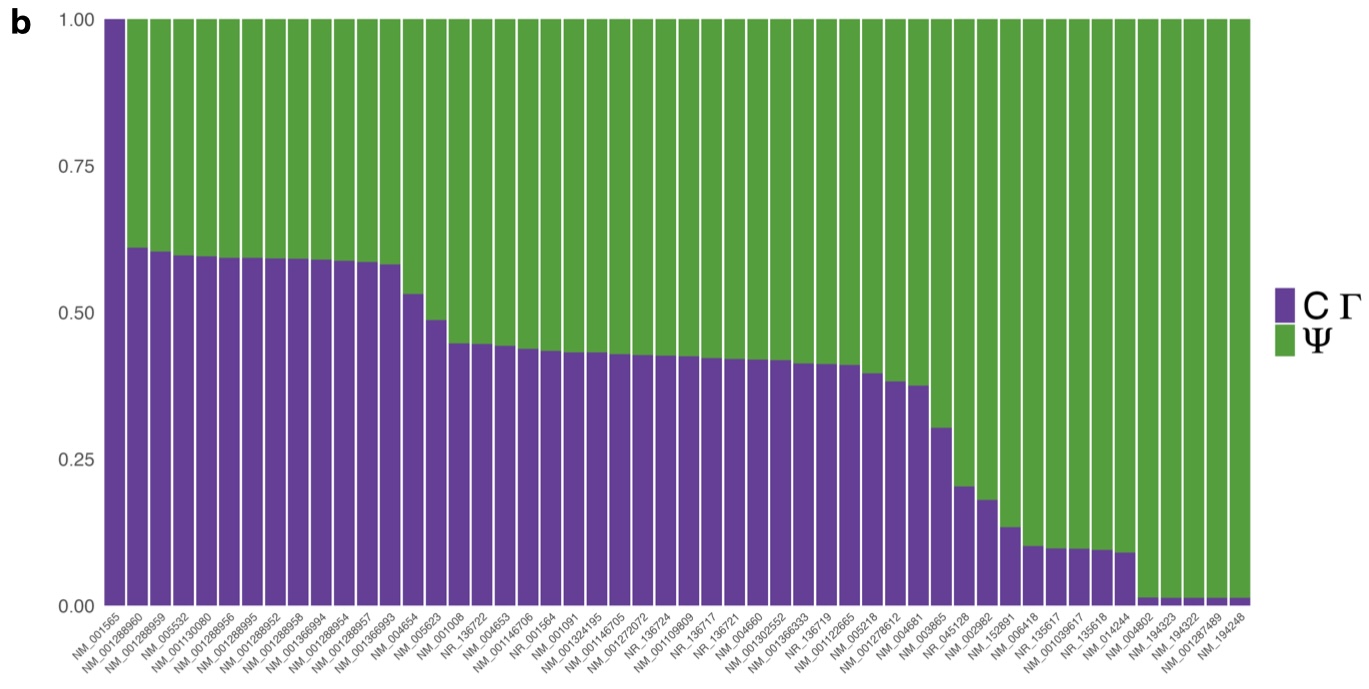} 
    \caption{Panel a: top 5 pathways for the first ten factors of $\Gamma$ after zeroing out the entries for which the corresponding 95\% credible intervals included 0. The plot shows the posterior mean and 95\% posterior credible interval for each element. Positive (negative) values are shown in red (blue). Panel b: proportions of variance explained by known ($C\Gamma$) versus unknown ($\Psi$) pathways for a subset of 50 genes with the highest variability.}     
    \label{fig:appGF}
\end{center}
\end{figure}

Figure~\ref{fig:appGF}b shows what fractions of systematic variability are explained by known pathways $(C \Gamma)$ or unknown components ($\Psi$) for the 50 genes with the highest variability. The variability of the gene C-X-C motif chemokine 10 (NM$-001565$ in the figure) is entirely explained by known pathways. This gene encodes a potent pro-inflammatory chemokine induced by IFN$-\gamma$ and is part of a well-characterized immune pathway. For instance, CXCL10 is included in the KEGG Chemokine Signaling pathway and the Hallmark Inflammatory Response gene set \citep{liu_11, ko2023host}. Hence, when interferon or inflammation pathways are active, CXCL10 is coordinately upregulated. Similarly, we note several IFI27 transcripts with more than half of their variation explained by known pathways. IFI27 is upregulated by IFN$-\alpha/\beta$ during antiviral responses \citep{Shojaei_25}. The dominance of CXCL10 and IFI27 indicates that type I/II interferon-driven inflammation is a major source of structured expression variance in the dataset. These genes are markers of an active immune response to infection or inflammatory challenge. Therefore, the top genes confirm that canonical immune pathways (especially interferon signaling) are a predominant, explainable source of gene expression variation in this dataset. Among the genes with the lowest fraction of variability explained by known pathways, we note some Otoferlin (OTOF) transcripts. OTOF acts as a calcium sensor for synaptic vesicle fusion in auditory inner hair cells \citep{otof} but is not expressed in immune or inflammatory cells.

\section{Discussion}

We have presented BASIL, a Bayesian matrix factorization framework for analyzing bulk RNA-sequencing data that integrates gene set annotations to extract biologically interpretable latent factors. Through comprehensive simulation studies and 'omics data analysis, we demonstrated that BASIL offers substantial improvements over existing methods in several critical aspects: accurate recovery of gene covariance matrix, rigorous uncertainty quantification, and enhanced interpretability of latent components.
A key advantage of BASIL is its fully probabilistic framework, which avoids the ad hoc post-processing required by methods like PLIER and naturally provides uncertainty quantification without the instability and computational expense of cross-validation approaches. BASIL is computationally efficient, making it broadly applicable without extensive tuning or computational resources.

While our proposed framework addresses several limitations in the current literature, there remain many directions for further development.
First, while we focused on bulk RNA-seq data, the framework could be extended to single-cell RNA-sequencing, where additional structure (e.g., cell type annotations, temporal dynamics) could be incorporated. Other interesting directions include  multi-study or multi-view settings, that are specifically designed for integrating diverse data modalities such as multi-omics. One promising direction would be extending scalable approaches in those settings \citep{blast, fama} to incorporate gene ontologies. Finally, it would be interesting to extend our methodology to model clustered data, letting the factor loadings to vary across clusters. This would allow the grouping of patients into biologically interpretable clusters based on their RNA-sequencing data, with potential applications to precision medicine and individualized treatment development.

\section*{Acknowledgments}
This work was partially supported by the National Institute of Health (NIH) (grants R01-ES035625, R01-ES027498, P42-ES010356, R01-AI155733), the Office of Naval Research (ONR) (grant N000142412626), the European Research Council under the European Union’s Horizon 2020 research and innovation (grant agreement No 856506), and the National Science Foundation (NSF) (IIS-2426762).

\section*{Disclosure statement}\label{disclosure-statement}

The authors have no conflicts of interest to declare 

\section*{Data Availability Statement}\label{data-availability-statement}

Both RNA-seq datasets used in this study are publicly available. The validation dataset provided in \cite{mao2019pathway} is available in the \texttt{PLIER} R package, and the global fever dataset was downloaded from the GEO (accession nos. GSE211567). 

\phantomsection\label{supplementary-material}
\bigskip

\begin{center}

{\large\bf SUPPLEMENTARY MATERIAL}

\end{center}

\begin{description}
\item[Supplementary Information:]
Additional details, results and plots and proof of theoretical results.
\item[R-package for BASIL:]
R-package containing code to run BASIL, together with a vignette illustrating how to use it. It is available at \url{https://github.com/federicastolf/BASIL}.
\item[Global fever analysis and simulations:]
Code to reproduce the simulations and the analysis of global fever data, together with the pre-processing of the dataset. It is available at \url{https://github.com/federicastolf/BASIL_paper}.
\end{description}

\bibliography{bibliography}


\newpage

\section*{Supplementary Material for \say{Pathway-based Bayesian factor models for 'omics data}}

\setcounter{section}{0}
\setcounter{table}{0}
\setcounter{figure}{0}
\setcounter{equation}{0}
\renewcommand{\thefigure}{S\arabic{figure}}
\renewcommand{\thesection}{S.\arabic{section}}
\renewcommand{\thetable}{S\arabic{table}}
\renewcommand{\theequation}{S.\arabic{equation}}

\section{BASIL estimates}

\begin{algorithm}[H]
\caption{BASIL: Bayesian Analysis with gene-Sets Informed Latent space}
\label{alg:basil}
\begin{algorithmic}[1]

\Require Standardized gene expression matrix $Y\in\mathbb{R}^{n \times p}$,
gene-set matrix $C\in\{0,1\}^{p\times q}$, upper-bound to latent dimension $k_{\max}$

\State \textbf{Estimate $k$.}
For each $k = 1, \dots, k_{\max}$ compute
\[
\mathrm{JIC}(k)= -2\ell_k + k\max(n,p)\log\{\min(n,p)\}
\]
Set $\hat k=\arg\min_{k\in\{1, \dots, k_{\max}\}}\mathrm{JIC}(k)$.

\State \textbf{Orthogonal decomposition.}
Compute orthonormal bases
$B_{\mathcal C}$ for $\mathcal{C}(C)$
and $B_{\mathcal C}$ for $\mathcal{N}(C)$.

\State \textbf{Pre-estimate the latent factors.}
Compute rank-$\hat k$ SVD
$Y\approx UDV^\top$,
set $\widehat{M}=\sqrt{n}\,U$.

\State \textbf{Separate source of variation in the data.}
Form
$Y_{\mathcal C}=YB_{\mathcal C}$,
$Y_{\mathcal N}=YB_{\mathcal N}$.

\State \textbf{Estimate shrinkage hyperparameters.}
Estimate 
$\widehat{\tau}_\Gamma^2$ and $\widehat{\tau}_\Psi^2$ as describe in Section \ref{subsec:hyperparameters}.

\State \textbf{Point estimates.} Obtain point estimates for the loadings via their posterior mean, which is available in closed form in \eqref{eq:posterior_mean_lambda}--\eqref{eq:posterior_mean_Lambda_outer}.

\State \textbf{Posterior updates.}
Draw posterior samples for $\Lambda_{\mathcal C}, \Lambda_{\mathcal N}$, and $\sigma^2$ via \eqref{eq:posterior_params} and \eqref{eq:posterior_update_cc}.

\State \textbf{Posterior updates for $\Lambda$.} Obtain the induced posterior samples for $\Lambda$ via \eqref{eq:loads2}, and, optionally, for $\Gamma$ and $\Psi$ via $\Gamma = (C^\top C)^{-1} C^\top B_{\mathcal C} \Lambda_{\mathcal C}$ and $\Psi = B_{\mathcal N}\Lambda_{\mathcal N}$

\Statex \hspace{-2em} \textbf{Output:} Point estimates and posterior samples for $\Lambda$ (and, optionally, for $\Gamma$ and $\Psi$) and $\sigma^2$.

\end{algorithmic}

\end{algorithm}

\section{Proofs of theoretical results}
We start by introducing some additional notation. For a matrix $A$, we denote by $||A||$, $||A||_2$ its spectral and Frobenius norm, respectively, , and by $tr(A)$ its trace. 
\subsection*{Proof of the main result}

First, we propose the following point estimator for the idiosyncratic variance, $\hat \sigma^2 = ||(I_n - U U^{\top}) Y ||_2^2/[(n - k) p]$, and show its consistency. 
\begin{proposition}\label{prop:sigma_hat_consistency}
    Under Assumptions \ref{assumption:data}--\ref{assumption:hyperparameters}, as $n \to \infty $, with probability at least $1-o(1)$, we have
    \begin{equation*}
       \hat \sigma^2 = \sigma_0^2 + r_n,
    \end{equation*}
    where $|r_n| \lesssim \frac{1}{\sqrt{n}} + \frac{1}{\sqrt{p_n}}$.
\end{proposition}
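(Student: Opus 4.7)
The plan is to decompose the residual Frobenius norm and control each piece via concentration inequalities combined with a spectral-perturbation argument. Writing $Y = S + E$, where $S = M_0 \Lambda_0^\top$ is the rank-$k$ signal and $E$ collects the i.i.d.\ $N(0,\sigma_0^2)$ errors, and using that $UU^\top Y$ is the best rank-$k$ approximation to $Y$ in Frobenius norm, the starting point is
\begin{equation*}
(n-k) p\, \hat \sigma^2 \;=\; \|(I_n - UU^\top)Y\|_F^2 \;=\; \|Y\|_F^2 - \sum_{l=1}^k d_l(Y)^2 \;=\; \|E\|_F^2 + 2\langle S, E\rangle + \Big(\|S\|_F^2 - \sum_{l=1}^k d_l(Y)^2\Big),
\end{equation*}
where the last equality expands $\|Y\|_F^2$ and uses $\|S\|_F^2 = \sum_{l=1}^k d_l(S)^2$ since $\mathrm{rank}(S) = k$.

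I would then bound the three resulting terms separately. First, $\|E\|_F^2 \sim \sigma_0^2 \chi^2_{np}$, so standard $\chi^2$ concentration gives $\|E\|_F^2 = \sigma_0^2 np + O_p(\sqrt{np})$ with probability $1 - o(1)$ under Assumption \ref{eq:dimension}. Second, since $M_0 \independent E$ by construction, conditional on $S$ the inner product $\langle S, E\rangle$ is centered Gaussian with variance $\sigma_0^2 \|S\|_F^2$; combining Assumption \ref{assumption:Lambda} with the law of large numbers for $M_0^\top M_0 /n \to I_k$ gives $\|S\|_F^2 \asymp np$, hence $|\langle S, E\rangle| = O_p(\sqrt{np})$. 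After dividing by $(n-k)p$, each of these contributions is of order $1/\sqrt{np}$, smaller than the target rate.

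The third, dominant, term requires Weyl's inequality together with a non-asymptotic bound on $\|E\|$. Weyl gives $|d_l(Y) - d_l(S)| \leq \|E\|$, so
\begin{equation*}
\big|d_l(Y)^2 - d_l(S)^2\big| \;\leq\; \|E\| \big(2 d_l(S) + \|E\|\big).
\end{equation*}
Assumption \ref{assumption:Lambda} together with the LLN for $M_0$ yields $d_l(S) \asymp \sqrt{np}$ for $l=1,\dots,k$, while standard non-asymptotic Gaussian operator-norm bounds give $\|E\| = O_p(\sqrt{n}+\sqrt{p})$ with probability $1-o(1)$ (the logarithmic growth condition in Assumption \ref{eq:dimension} ensures the high-probability event has negligible complement). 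Summing over the $k$ factors (with $k$ fixed by Assumption \ref{assumption:hyperparameters}),
\begin{equation*}
\Big|\sum_{l=1}^k d_l(Y)^2 - \|S\|_F^2\Big| \;=\; O_p\big((\sqrt{n}+\sqrt{p})\sqrt{np}\big) \;=\; O_p\big(n\sqrt{p}+p\sqrt{n}\big).
\end{equation*}
Dividing by $(n-k)p$ produces the rate $1/\sqrt{n} + 1/\sqrt{p}$, which dominates the other two contributions; together with $(n-k)p/(np) = 1 + O(1/n)$, this yields the claim.

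The main technical obstacle is the sharp operator-norm bound $\|E\| = O_p(\sqrt{n}+\sqrt{p})$: the target rate $1/\sqrt{n}+1/\sqrt{p}$ is dictated precisely by how Weyl's $O(\|E\|)$ perturbation is amplified by the large signal singular values $d_l(S) = \Theta(\sqrt{np})$. Once this random-matrix bound is granted, the rest is routine $\chi^2$ concentration, a Gaussian tail bound, and careful bookkeeping of orders, with the final high-probability guarantee obtained by a union bound over the three events above.
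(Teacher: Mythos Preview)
Your proof is correct and takes a genuinely different route from the paper. The paper introduces the true signal projector $U_0 U_0^\top$ and decomposes $(I_n - UU^\top)Y = (I_n - U_0 U_0^\top)E + (U_0 U_0^\top - UU^\top)Y$, then controls the second piece via an eigenspace perturbation bound $\|UU^\top - U_0 U_0^\top\| \lesssim 1/n + 1/p$ (their Proposition~\ref{prop:reovery_P}, essentially a Davis--Kahan statement). You instead work directly with singular \emph{values}: using $\|(I_n-UU^\top)Y\|_F^2 = \|Y\|_F^2 - \sum_{l\le k} d_l(Y)^2$ and handling $\sum_l \big(d_l(S)^2 - d_l(Y)^2\big)$ via Weyl's inequality together with the random-matrix bound $\|E\| = O_p(\sqrt{n}+\sqrt{p})$. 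Your argument is more self-contained in that it needs only singular-value perturbation rather than subspace perturbation; the paper's route is natural in its context because the projector bound is already established for use elsewhere (Proposition~\ref{prop:L_consistency}), so invoking it here adds no extra machinery. Both arrive at the same dominant rate $1/\sqrt{n}+1/\sqrt{p}$ for the same underlying reason: a spectral perturbation of size $\|E\|\asymp \sqrt{n}+\sqrt{p}$ is amplified by the signal singular values $d_l(S)\asymp\sqrt{np}$.
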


\begin{proof}[Proof of Proposition \ref{prop:sigma_hat_consistency}]
    First, consider the following decomposition 
    \begin{equation*}
        \begin{aligned}
            (I_n - U U^{\top}) Y = (I_n - U_0 U_0^{\top}) Y + (U_0 U_0^{\top} - U U^{\top}) Y  = (I_n - U_0 U_0^{\top}) E + (U_0 U_0^{\top} - U U^{\top}) Y.
        \end{aligned}
    \end{equation*}
Thus,
  \begin{equation*}
        \begin{aligned}
           ||(I_n - U U^{\top}) Y ||_2^2 & =  ||(I_n - U_0 U_0^{\top}) E||_2^2  +|| (U_0 U_0^{\top} - U U^{\top}) Y||_2^2 + 2 tr(E^\top(I_n - U_0 U_0^{\top}) (U_0 U_0^{\top} - U U^{\top}) Y ).
        \end{aligned}
    \end{equation*}
    Next, we analyze each term separately.
    \begin{enumerate}
        \item $||(I_n - U_0 U_0^{\top}) E||_2^2 \sim \sigma_0 \chi_{(n-k)p}^2$, where $ \chi_{\nu}^2$ denotes the $\chi$-squared distribution with $\nu$ degrees of freedom.
        \item With probability at least $1-o(1)$, $|| (U_0 U_0^{\top} - U U^{\top}) Y||_2 \lesssim ||U_0 U_0^{\top} - U U^{\top}|| ||Y||_2 \lesssim \frac{\sqrt{p}}{\sqrt{n}} + \frac{\sqrt{n}}{\sqrt{p}}$, since, with probability at least $1-o(1)$, 
        $ ||U U^\top - U_0 U_0^\top || \lesssim \frac{1}{n} + \frac{1}{p}$ by Proposition \ref{prop:reovery_P} and $||Y||_2 \lesssim \sqrt{np}$ by Proposition \ref{prop:norm_Y}.
        \item With probability at least $1-o(1)$, $| tr(E^\top(I_n - U_0 U_0^{\top}) (U_0 U_0^{\top} - U U^{\top}) Y )| \leq ||E||_2 || (U_0 U_0^{\top} - U U^{\top}) Y||_2 \lesssim \sqrt{n} + \sqrt{p} + \frac{p}{\sqrt{n}} + \frac{n}{\sqrt{p}}$, since, with probability at least $1-o(1)$, 
        $ || (U_0 U_0^{\top} - U U^{\top}) Y||_2 \frac{\sqrt{p}}{\sqrt{n}} + \frac{\sqrt{n}}{\sqrt{p}}$ by the previous point and $||E||_2 \lesssim \sqrt{np}$ by Proposition \ref{prop:norm_E}.       
    \end{enumerate}
Moreover, we have $ \bigg|\frac{||(I_n - U_0 U_0^{\top}) E||_2^2}{(n-k)p} - \sigma_0^2\bigg| \lesssim \frac{1}{\sqrt{n}} + \frac{1}{\sqrt{p}}$, with probability at least $1-o(1)$, by Corollary 2.8.3 of \citet{vershynin_book}.
Combining all of the above and letting $r_n = \hat \sigma^2 - \sigma_0^2$, we have
\begin{equation*}
    \begin{aligned}
        |r_n| \lesssim \frac{1}{\sqrt{n}} + \frac{1}{\sqrt{p}} ,+ \frac{1}{(n-k)p} \big(\frac{p}{n} + \frac{n}{p} + \sqrt{n} + \sqrt{p} + \frac{p}{\sqrt{n}} + \frac{n}{\sqrt{p}}\big) \lesssim   \frac{1}{\sqrt{n}} + \frac{1}{\sqrt{p}},
    \end{aligned}
\end{equation*}
with probability at least $1-o(1)$.    
\end{proof}

\begin{proof}[Proof of Theorem \ref{thm:prior_variances}]
    It follows as a corollary from Propositions \ref{prop:sigma_hat_consistency} and \ref{prop:L_consistency}.
\end{proof}

\subsection*{Auxiliary Results}

Throughout, we denote by $M_0$ the true latent factors.

\begin{proposition}[Proposition 3.5 of \citet{fable}]\label{prop:reovery_P}
    Let $U$ be the matrix of left singular vectors associated to the leading $k$ singular values of $Y$ and $U_{0}$ be the matrix of left singular vectors of $M_0 \Lambda_0^\top$. Then, under Assumptions \ref{assumption:data}--\ref{assumption:hyperparameters}, as $n \to \infty $, with probability at least $1-o(1)$, we have
    \begin{equation*}
        ||U U^\top - U_0 U_0^\top || \lesssim \frac{1}{n} + \frac{1}{p}.
    \end{equation*}
\end{proposition}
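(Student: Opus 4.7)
\textbf{Proof plan for Proposition \ref{prop:reovery_P}.}

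The statement is a singular-subspace perturbation bound, so my plan is to invoke Wedin's $\sin\Theta$ theorem combined with high-probability control of the noise spectral norm and the signal spectral gap. I decompose $Y = X + E$ where $X = M_0 \Lambda_0^\top$ is the low-rank signal (rank exactly $k$ under Assumption \ref{assumption:Lambda}) and $E$ has i.i.d.\ $\mathcal N(0,\sigma_0^2)$ entries. By construction, $U_0$ spans the left singular subspace of $X$, while $U$ spans the top-$k$ left singular subspace of the perturbed matrix $X+E$; these are precisely the ingredients of Wedin's theorem, which bounds $\|UU^\top - U_0 U_0^\top\|$ in terms of $\|E\|$ and the gap $\sigma_k(X)-\sigma_{k+1}(X) = \sigma_k(X)$.

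The next step is to establish the two required random-matrix bounds with probability $1-o(1)$. For the noise, the non-asymptotic Davidson--Szarek/Bai--Yin bound for a rectangular Gaussian matrix yields $\|E\| \lesssim \sqrt{n}+\sqrt{p}$. For the signal gap, I use the multiplicativity of the smallest non-zero singular value: $\sigma_k(X) = \sigma_k(M_0\Lambda_0^\top) \geq \sigma_k(M_0)\,\sigma_k(\Lambda_0)$. Assumption \ref{assumption:Lambda} (together with the identifiability condition in Assumption \ref{assumption:data}, which forces the $k$-th singular value of $\Lambda_0$ itself to be $\asymp \sqrt p$) delivers $\sigma_k(\Lambda_0)\asymp \sqrt p$, and standard Gaussian random-matrix theory applied to the $n\times k$ matrix $M_0$ with i.i.d.\ $\mathcal N(0,1)$ entries gives $\sigma_k(M_0) \asymp \sqrt n$ with high probability (for $n\gg k$, which is guaranteed by Assumption \ref{eq:dimension}). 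Combining, $\sigma_k(X) \asymp \sqrt{np}$.

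Plugging these into Wedin gives the first-order rate
\[
\|UU^\top - U_0 U_0^\top\| \;\lesssim\; \frac{\|E\|}{\sigma_k(X)} \;\lesssim\; \frac{\sqrt n + \sqrt p}{\sqrt{np}} \;=\; \frac{1}{\sqrt p} + \frac{1}{\sqrt n}.
\]
However, the statement claims the \emph{squared} rate $\frac{1}{n}+\frac{1}{p}$, which is sharper and cannot come from a naive application of Wedin. This is the main obstacle. To close the gap, my plan is to follow the second-order/leave-one-out style argument standard in the spiked factor literature (and used in the cited \citet{fable}): write the eigen-equation for the population quantity $YY^\top$, expand around $XX^\top$ to second order, and exploit Gaussian structure to show that the leading-order perturbation $U_0^\top E E^\top U_0$ is close to $\sigma_0^2 p I_k$, a multiple of the identity that shifts eigenvalues but does not rotate eigenvectors. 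After removing this bias, the residual perturbation on the subspace is of order $\|E\|^2/\sigma_k(X)^2$ rather than $\|E\|/\sigma_k(X)$, which produces exactly the claimed rate
\[
\frac{\|E\|^2}{\sigma_k(X)^2} \;\lesssim\; \frac{n+p}{np} \;=\; \frac{1}{p}+\frac{1}{n}.
\]

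The hard part is therefore not the first-order bound, which is immediate from Wedin plus Gaussian concentration, but the cancellation argument that upgrades $\|E\|/\sigma_k(X)$ to $\|E\|^2/\sigma_k(X)^2$. Since the result is cited verbatim from Proposition 3.5 of \citet{fable}, I would either invoke it directly or, if reproving, adapt the eigenvector perturbation expansion (e.g., in the spirit of Abbe--Fan--Wang--Zhong entrywise bounds or Cai--Zhang rate-optimal subspace bounds) specialized to the Gaussian spike model with signal strength $\sqrt{np}$.
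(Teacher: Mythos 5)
The paper does not prove this proposition at all: it is imported verbatim as Proposition~3.5 of \citet{fable}, so the ``paper's approach'' is simply to cite the external result. Your fallback of invoking the citation directly therefore matches the paper exactly, and for the purposes of this manuscript that is all that is required.

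Your supplementary proof sketch, however, has a genuine gap at precisely the step you identify as the hard one. The setup is fine: Wedin plus $\|E\|\lesssim\sqrt{n}+\sqrt{p}$ and $s_k(M_0\Lambda_0^\top)\gtrsim s_k(M_0)s_k(\Lambda_0)\asymp\sqrt{np}$ (note that $s_k(\Lambda_0)\asymp\sqrt{p}$ is given directly by Assumption~\ref{assumption:Lambda}; you do not need the orthogonality condition of Assumption~\ref{assumption:data} for this) yields the first-order rate $1/\sqrt{n}+1/\sqrt{p}$. But the claim that subtracting the bias $\sigma_0^2 p I$ from $EE^\top$ leaves a residual subspace perturbation of order $\|E\|^2/s_k(X)^2$ is asserted, not argued, and it ignores the cross term. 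Writing $YY^\top = XX^\top + XE^\top + EX^\top + (EE^\top-\sigma_0^2 pI) + \sigma_0^2 pI$ with $X=M_0\Lambda_0^\top=U_0D_0V_0^\top$, the off-diagonal block of the perturbation in the eigenbasis of $XX^\top$ contains $U_{0\perp}^\top E V_0 D_0$, whose spectral norm is of order $\sqrt{n}\cdot\sqrt{np}$; dividing by the eigengap $np$ gives $1/\sqrt{p}$, not $1/p$. This term is linear in $E$, so no second-order expansion of the $EE^\top$ block can remove it, and a rank-one calculation ($X=\sqrt{np}\,uv^\top$, $\hat u\approx u+P_{u^\perp}Ev/\sqrt{np}$, $\|P_{u^\perp}Ev\|\asymp\sigma_0\sqrt{n}$) suggests $\|UU^\top-U_0U_0^\top\|\asymp 1/\sqrt{p}$ is the true order of this contribution. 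So your sketch as written does not establish the stated rate $1/n+1/p$; either additional structure from \citet{fable} (e.g.\ a different normalization, a squared quantity, or cancellations specific to their construction) is needed, or the bound must be taken on faith from the citation. If you reprove it, the cross term --- not the $EE^\top$ bias --- is the object you must control.
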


\begin{proposition}\label{prop:norm_E}
    Under Assumption \ref{assumption:data}--\ref{assumption:Lambda}, we have 
    \begin{equation*}
    \begin{aligned}
     ||EP_C|| &\lesssim \sqrt{n} +\sqrt{q},& \quad ||E(I_p - P_C)|| &\lesssim \sqrt{n} + \sqrt{p-q},\\
          ||EP_C||_2 &\lesssim \sqrt{nq},& \quad ||E(I_p - P_C)||_2 &\lesssim \sqrt{n(p-q)}, \\
           ||U_0^\top EP_C||_2 &\lesssim \sqrt{kq},& \quad ||U_0^\top E(I_p - P_C)||_2 &\lesssim \sqrt{k(p-q)},  
    \end{aligned}   
    \end{equation*}
    with probability at least $1-o(1)$.
\end{proposition}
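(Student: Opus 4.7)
The plan is to reduce each of the six inequalities to a standard concentration statement for a Gaussian matrix with i.i.d.\ entries. The starting observation is that the two orthogonal projectors factor as $P_C = B_{\mathcal C} B_{\mathcal C}^\top$ and $I_p - P_C = B_{\mathcal N} B_{\mathcal N}^\top$, with $B_{\mathcal C} \in \mathbb R^{p \times q}$ and $B_{\mathcal N} \in \mathbb R^{p \times (p-q)}$ having orthonormal columns. Post-multiplication by $B_{\mathcal C}^\top$ (or $B_{\mathcal N}^\top$) is an isometry on the relevant column space, so it preserves both operator and Frobenius norms; in particular $\|EP_C\| = \|EB_{\mathcal C}\|$, $\|EP_C\|_2 = \|EB_{\mathcal C}\|_2$, and likewise for $EB_{\mathcal N}$ and for the $U_0^\top$-prefixed versions. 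This step alone reduces all six bounds to estimates on four $k\times q$, $k\times(p-q)$, $n\times q$, $n\times(p-q)$ Gaussian matrices.

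Next, I would verify that $EB_{\mathcal C}$ is $n\times q$ with i.i.d.\ $N(0,\sigma_0^2)$ entries, because each column $(EB_{\mathcal C})_{\cdot l} = E b_l \sim N(0,\sigma_0^2 I_n)$ and distinct columns are uncorrelated (hence independent under joint Gaussianity) by $B_{\mathcal C}^\top B_{\mathcal C} = I_q$. The identical argument applies to $EB_{\mathcal N}$. With that in hand, Corollary 7.3.3 of \citet{vershynin_book} yields $\|EB_{\mathcal C}\| \le \sigma_0(\sqrt{n} + \sqrt{q} + t)$ with probability at least $1 - 2e^{-t^2/2}$; taking $t = \log n$ and invoking Assumption \ref{eq:dimension} gives the first operator-norm bound, and the $B_{\mathcal N}$ version is identical. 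For the Frobenius lines, $\|EB_{\mathcal C}\|_2^2 \sim \sigma_0^2 \chi^2_{nq}$ and $\|EB_{\mathcal N}\|_2^2 \sim \sigma_0^2 \chi^2_{n(p-q)}$, so Laurent--Massart concentration gives $\|EB_{\mathcal C}\|_2 \lesssim \sqrt{nq}$ and $\|EB_{\mathcal N}\|_2 \lesssim \sqrt{n(p-q)}$ with probability $1-o(1)$.

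The last two bounds require a conditioning argument. Since $U_0$ is a measurable function of $M_0$ and the deterministic $\Lambda_0$, and $M_0 \perp E$ by Assumption \ref{assumption:data}, we may condition on $M_0$ (equivalently, on $U_0$). Given $U_0$, the same reasoning as above shows that $U_0^\top E \in \mathbb R^{k\times p}$ has i.i.d.\ $N(0,\sigma_0^2)$ entries: its $j$-th column is $U_0^\top \epsilon_{\cdot j} \sim N(0,\sigma_0^2 U_0^\top U_0) = N(0,\sigma_0^2 I_k)$, and distinct columns are independent. Composing with $B_{\mathcal C}$ (resp.\ $B_{\mathcal N}$) as in the first step gives $U_0^\top E B_{\mathcal C} \in \mathbb R^{k\times q}$ i.i.d.\ Gaussian, and a $\chi^2$ tail bound delivers $\|U_0^\top E B_{\mathcal C}\|_2 \lesssim \sqrt{kq}$ with high probability; because the bound does not depend on $U_0$, integrating removes the conditioning, and the $B_{\mathcal N}$ version is handled identically.

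I do not anticipate a real obstacle here: the proof is a chain of reductions to Gaussian concentration, and the only point that requires care is the bookkeeping separating the two sources of randomness ($M_0$ versus $E$) so that the conditioning on $U_0$ in the final step is rigorous. Everything else is routine once the isometric reduction $P_C = B_{\mathcal C} B_{\mathcal C}^\top$ has been exploited.
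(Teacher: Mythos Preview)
Your proposal is correct and follows essentially the same route as the paper: reduce via $P_C = B_{\mathcal C}B_{\mathcal C}^\top$ and $I_p - P_C = B_{\mathcal N}B_{\mathcal N}^\top$ to i.i.d.\ Gaussian matrices, then invoke standard concentration (the paper cites Corollary~5.35 of \citet{vershynin_12} for everything, whereas you split off $\chi^2$ tail bounds for the Frobenius norms and are more explicit about conditioning on $U_0$, but this is a difference in exposition, not in substance).
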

\begin{proof}[Proof of Proposition \ref{prop:norm_E}]
    The results follow from Corollary 5.35 of \citet{vershynin_12} after noting that $||EP_C||_2 = ||E B_{\mathcal C}||_2$, $||E(I_p - P_C)||_2 = ||E B_{\mathcal N}||_2$, $||U_0^\top EP_C||_2 = ||U_0^\top E B_{\mathcal C}||_2$, and $||U_0^\top E(I_p - P_C)||_2 = ||U_0^\top E B_{\mathcal N}||_2$, with $E B_{\mathcal C} \in \mathbb R^{n \times q}$, $E B_{\mathcal N} \in \mathbb R^{n \times (p-q)}$, $U_0^\top E B_{\mathcal C} \in \mathbb R^{k \times q}$, and $U_0^\top E B_{\mathcal N} \in \mathbb R^{k \times (p-q)}$ having entries being distributed as independent Gaussian random variables with $0$ mean and $\sigma_0$ standard deviation.
\end{proof}

\begin{proposition}\label{prop:norm_Y}
    Under Assumption \ref{assumption:data}--\ref{assumption:Lambda}, we have 
    \begin{equation*}
        ||YP_C||_2 \lesssim \sqrt{n}||C\Gamma_0||_2 + \sqrt{nq}, \quad ||Y(I - P_C)||_2 \lesssim \sqrt{n}||\Psi_0||_2 + \sqrt{n(p-q)}, 
    \end{equation*}
    with probability $1-o(1)$.
\end{proposition}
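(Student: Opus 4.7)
The plan is to combine the data-generating model with the identifiability constraint to obtain clean expressions for $YP_C$ and $Y(I_p-P_C)$, then apply the triangle inequality and reuse the bounds on $\|EP_C\|_2$ and $\|E(I_p-P_C)\|_2$ already established in Proposition \ref{prop:norm_E}. Writing $Y = M_0\Lambda_0^\top + E$ with $\Lambda_0 = C\Gamma_0 + \Psi_0$, the identifiability condition of Assumption \ref{assumption:data}, understood as in Section \ref{subsec:prior} through the requirement $\Psi\in\mathcal N(C)$, gives $C^\top\Psi_0 = 0$, so that $P_C\Psi_0 = 0$ and $(I_p-P_C)C\Gamma_0 = 0$. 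This yields the clean splits
\begin{equation*}
YP_C = M_0(C\Gamma_0)^\top + EP_C, \qquad Y(I_p-P_C) = M_0\Psi_0^\top + E(I_p-P_C).
\end{equation*}

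Next I would apply the Frobenius-norm triangle inequality together with the submultiplicativity $\|AB\|_2 \leq \|A\|\,\|B\|_2$ (with $\|\cdot\|$ the operator norm and $\|\cdot\|_2$ the Frobenius norm, following the paper's convention) to conclude
\begin{equation*}
\|YP_C\|_2 \leq \|M_0\|\,\|C\Gamma_0\|_2 + \|EP_C\|_2, \qquad \|Y(I_p-P_C)\|_2 \leq \|M_0\|\,\|\Psi_0\|_2 + \|E(I_p-P_C)\|_2.
\end{equation*}
It then suffices to control $\|M_0\|$: since $M_0\in\mathbb{R}^{n\times k}$ has i.i.d.\ $N(0,1)$ entries and $k$ is fixed under Assumption \ref{assumption:hyperparameters}, the standard Gaussian operator-norm concentration (Corollary 5.35 of \citet{vershynin_12}) gives $\|M_0\| \lesssim \sqrt{n} + \sqrt{k} \lesssim \sqrt{n}$ with probability $1-o(1)$. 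Plugging in $\|EP_C\|_2 \lesssim \sqrt{nq}$ and $\|E(I_p-P_C)\|_2 \lesssim \sqrt{n(p-q)}$ from Proposition \ref{prop:norm_E}, and taking the intersection of the three high-probability events, produces the claimed inequalities on a common event of probability $1-o(1)$.

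The only point requiring care, and really just a notational one, is verifying that the condition $C\Gamma_0 \perp \Psi_0$ in Assumption \ref{assumption:data} is meant in the column-space sense $C^\top\Psi_0 = 0$, consistent with the $B_{\mathcal C},B_{\mathcal N}$ reparameterization that underpins the entire estimation procedure; this is what reduces $P_C\Lambda_0$ and $(I_p-P_C)\Lambda_0$ to $C\Gamma_0$ and $\Psi_0$, respectively. Once this identification is made explicit, the argument reduces to routine norm manipulations supported by Proposition \ref{prop:norm_E} and a standard Gaussian matrix bound, so I anticipate no substantial technical obstacle.
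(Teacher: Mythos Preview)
Your proposal is correct and follows essentially the same approach as the paper: decompose $YP_C = M_0(C\Gamma_0)^\top + EP_C$ (and analogously for $Y(I_p-P_C)$) using the orthogonality constraint, bound $\|M_0\|\lesssim\sqrt{n}+\sqrt{k}\asymp\sqrt{n}$ via Corollary 5.35 of \citet{vershynin_12}, and invoke Proposition~\ref{prop:norm_E} for the noise terms. Your explicit justification that $P_C\Psi_0=0$ follows from the identifiability condition in Assumption~\ref{assumption:data} is a detail the paper leaves implicit, but otherwise the arguments coincide.
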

\begin{proof}[Proof of Proposition \ref{prop:norm_Y}]
    First, note $YP_C = M_0 \Gamma_0^\top C^\top + E P_C$. Next, note that, with probability at least $1-o(1)$, $||M_0 \Gamma_0^\top C^\top  ||_2 \leq ||M_0|| ||\Gamma_0^\top C^\top  ||_2 \lesssim (\sqrt{n} + \sqrt{k})||\Gamma_0^\top C^\top  ||_2  \asymp \sqrt{n}||\Gamma_0^\top C^\top  ||_2$, since $||M_0|| \lesssim \sqrt{n} + \sqrt{k}$, with probability at least $1-o(1)$, by Corollary 5.35 of \citet{vershynin_12}, proving the first result in combination with Proposition \ref{prop:norm_E}. The second result is proven with analogous steps.
\end{proof}

Then, consider the two following quantities, $L_{\mathcal C}= || P_C V D||_F^2 /n$, $L_{\mathcal N}= || (I_p - P_C) V D||_F^2 /n$, measuring the magnitude of the signal explained by known and unknown pathways respectively. 
\begin{proposition}\label{prop:L_consistency}
    Under Assumptions \ref{assumption:data}--\ref{assumption:hyperparameters}, as $n \to \infty $, we have
    \begin{equation*}
       L_{\mathcal C} = ||C \Gamma_0||_2^2 + r_{\mathcal C, n}, \quad   L_{\mathcal N} = ||\Psi_0||_2^2 + r_{\mathcal N, n}, 
    \end{equation*}
where $|r_{\mathcal C, n}| \lesssim ||C \Gamma_0||_2^2 (1/\sqrt{n}  + 1/p) +||C \Gamma_0||_2(1+ \sqrt{q/n} ) + q/n + q/p$ and $|r_{\mathcal N, n}| \lesssim ||\Psi_0||_2^2 (1/\sqrt{n} + 1/p) +  ||\Psi_0||_2(1 + \sqrt{(p-q) / n} ) + (p-q) / n +(p-q) / p$, with probability at least $1-o(1)$.
\end{proposition}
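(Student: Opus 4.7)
The plan is to exploit the identity $VD = Y^\top U$ (from the rank-$k$ truncated SVD $Y \approx U D V^\top$), substitute the model $Y = M_0 \Lambda_0^\top + E$ together with $\Lambda_0 = C\Gamma_0 + \Psi_0$, and invoke the identifiability condition $P_C \Psi_0 = 0$ from Assumption \ref{assumption:data}. These combine to give
\begin{equation*}
P_C V D \;=\; C \Gamma_0 M_0^\top U + P_C E^\top U, \qquad (I_p - P_C) V D \;=\; \Psi_0 M_0^\top U + (I_p - P_C) E^\top U.
\end{equation*}
Expanding the squared Frobenius norm decomposes $n L_{\mathcal C}$ into a signal term $\|C \Gamma_0 M_0^\top U\|_2^2$, a pure-noise term $\|P_C E^\top U\|_2^2$, and a cross term $2\langle C\Gamma_0 M_0^\top U, P_C E^\top U\rangle_F$, which I will bound separately. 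The argument for $L_{\mathcal N}$ is symmetric under $P_C \leftrightarrow I_p - P_C$, $C\Gamma_0 \leftrightarrow \Psi_0$, $q \leftrightarrow p-q$.

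For the signal term, I write $\|C\Gamma_0 M_0^\top U\|_2^2 = \operatorname{tr}(A\, M_0^\top P_U M_0)$ with $A = \Gamma_0^\top C^\top C \Gamma_0$ and $P_U = U U^\top$, and reduce $P_U \to P_{U_0} \to$ (identity on $\operatorname{col}(M_0)$) in two steps. The first substitution costs $|\operatorname{tr}(A\,M_0^\top (P_U - P_{U_0}) M_0)| \lesssim \|A\| \cdot k \cdot \|M_0\|^2 \cdot \|P_U - P_{U_0}\|$, using $|\operatorname{tr}(XY)| \leq \|X\|\,\|Y\|_{*}$ together with $\operatorname{rank}(P_U - P_{U_0}) \leq 2k$; Proposition \ref{prop:reovery_P} and the Gaussian bound $\|M_0\|^2 \lesssim n$ make this at most $\|C\Gamma_0\|^2(1 + n/p)$. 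Since Assumption \ref{assumption:Lambda} forces $\Lambda_0$ to have full column rank, $U_0$ spans $\operatorname{col}(M_0)$ and $P_{U_0} M_0 = M_0$, so the trace reduces to $\operatorname{tr}(A\,M_0^\top M_0)$. A Frobenius Cauchy--Schwarz bound $|\operatorname{tr}(A(M_0^\top M_0 - nI_k))| \leq \|A\|_2 \cdot \|M_0^\top M_0 - n I_k\|_2$ combined with the Wishart concentration $\|M_0^\top M_0 - n I_k\|_2 \lesssim \sqrt{n}$ and $\|A\|_2 \leq \|C\Gamma_0\|_2^2$ contributes $\sqrt{n}\,\|C\Gamma_0\|_2^2$. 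Dividing by $n$ recovers the $\|C\Gamma_0\|_2^2\,(1/\sqrt{n} + 1/p)$ portion of the error.

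For the noise term I use $\|P_C E^\top U\|_2^2 = \operatorname{tr}(B_{\mathcal C}^\top E^\top P_U E B_{\mathcal C})$; swapping $P_U$ for $P_{U_0}$ costs $\|P_U - P_{U_0}\| \cdot \|E B_{\mathcal C}\|_2^2 \lesssim (1/n + 1/p) \cdot nq$ by Propositions \ref{prop:reovery_P} and \ref{prop:norm_E}, and the leading piece $\|U_0^\top E B_{\mathcal C}\|_2^2 \lesssim kq$ is again supplied by Proposition \ref{prop:norm_E}. Thus $\|P_C E^\top U\|_2^2 \lesssim q\,(1 + n/p)$, which divided by $n$ yields the $q/n + q/p$ contribution. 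The cross term is controlled by Cauchy--Schwarz,
\begin{equation*}
|\langle C\Gamma_0 M_0^\top U, P_C E^\top U\rangle_F| \;\leq\; \|C\Gamma_0 M_0^\top U\|_2 \cdot \|P_C E^\top U\|_2 \;\lesssim\; \sqrt{n}\,\|C\Gamma_0\|_2 \cdot \sqrt{q\,(1 + n/p)},
\end{equation*}
which divided by $n$ is $\|C\Gamma_0\|_2\,(\sqrt{q/n} + \sqrt{q/p}) \leq \|C\Gamma_0\|_2\,(1 + \sqrt{q/n})$ using $q \leq p$. Summing the three pieces gives the claimed bound on $|r_{\mathcal C, n}|$.

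The main delicacy is the two-step projector reduction $M_0^\top P_U M_0 \to M_0^\top P_{U_0} M_0 = M_0^\top M_0 \to n I_k$. The first step must exploit the low rank of $P_U - P_{U_0}$ via $|\operatorname{tr}(XY)| \leq \|X\|\,\|Y\|_*$ so that the operator-norm rate $1/n + 1/p$ from Proposition \ref{prop:reovery_P} is not swamped by the factor $\|M_0\|^2 \asymp n$; the second step must use Frobenius--Frobenius Cauchy--Schwarz so that the resulting bound picks up $\|A\|_2 \leq \|C\Gamma_0\|_2^2$ (rather than the cruder $k\|A\| \asymp k\|C\Gamma_0\|^2$, which would be too loose when the singular values of $C\Gamma_0$ are spread out) and combines with the $\sqrt n$ Wishart concentration to produce the advertised $\|C\Gamma_0\|_2^2/\sqrt n$ rate.
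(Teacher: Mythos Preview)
Your proof is correct and follows essentially the same strategy as the paper's: both substitute $VD = Y^\top U$, use $P_C Y^\top = C\Gamma_0 M_0^\top + P_C E^\top$ (from $P_C\Psi_0=0$), and control the remainder via Proposition~\ref{prop:reovery_P}, Proposition~\ref{prop:norm_E}, and Wishart concentration for $M_0^\top M_0 - nI_k$. The only organizational difference is that you expand $P_C VD$ first and then square---distributing the projector swap $P_U\to P_{U_0}$ into the signal and noise pieces separately---whereas the paper works directly with $P_C V D^2 V^\top P_C = P_C Y^\top UU^\top Y P_C$, splits $UU^\top = U_0U_0^\top + (UU^\top-U_0U_0^\top)$ once, and bounds the global correction term $P_C Y^\top(UU^\top-U_0U_0^\top)YP_C$ via $\|YP_C\|_2$ (their Proposition~\ref{prop:norm_Y}); the resulting terms and rates coincide.
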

\begin{proof}[Proof of Proposition \ref{prop:L_consistency}]
    We only show the result for $ L_{\mathcal C}$, since the one for $ L_{\mathcal N}$ follows with similar steps.     
    Since $P_C VD = P_C Y^\top U$, we have the following decomposition, 
    \begin{equation*}
        \begin{aligned}
          P_C VD^2 V^\top P_C =& P_C Y^\top U U^\top Y P_C   = P_C Y^\top U_0 U_0^\top Y P_C + P_C Y^\top (U U^\top -  U_0 U_0^\top) Y P_C   \\
          =& C \Gamma_0 M_0^\top M_0 \Gamma_0^\top C^\top + C \Gamma_0 M_0^\top E P_C + P_C E^\top  M_0 \Gamma_0^\top C^\top + P_C E^\top U_0 U_0^\top E P_C \\
          &+ P_C Y^\top (U U^\top -  U_0 U_0^\top) Y P_C\\
          =& n C \Gamma_0 \Gamma_0^\top C^\top +  C \Gamma_0 (M_0^\top M_0 - nI_k)\Gamma_0^\top C^\top+ C \Gamma_0 M_0^\top E P_C + P_C E^\top  M_0 \Gamma_0^\top C^\top \\
          &+  P_C E^\top U_0 U_0^\top E P_C + P_C Y^\top (U U^\top -  U_0 U_0^\top) Y P_C\\ 
        \end{aligned}
    \end{equation*}
    Next, we bound each term.
    \begin{enumerate}
        \item With probability at least $1-o(1)$, we have
        $ ||C \Gamma_0 (M_0^\top M_0 - nI_k)\Gamma_0^\top C^\top ||_2 \leq ||C \Gamma_0||_2^2 || M_0^\top M_0 - nI_k|| \lesssim ||C \Gamma_0 ||_2^2 (\sqrt{n} + \sqrt{k})$, since $|| M_0^\top M_0 - nI_k||\lesssim \sqrt{n} + \sqrt{k}$ by Corollary 5.35 of \citet{vershynin_12}. 
        \item With probability at least $1-o(1)$, we have 
        \begin{equation*}
            \begin{aligned}
                ||C \Gamma_0 M_0^\top E P_C||_2 \leq ||C \Gamma_0 ||_2 ||M_0|| ||E P_C||  \lesssim ||C \Gamma_0 ||_2 (\sqrt{n} + \sqrt{k}) (\sqrt{n} + \sqrt{q}) \lesssim ||C \Gamma_0 ||_2 (n + \sqrt{n q}). 
            \end{aligned}
        \end{equation*}
        since, with probability at least $1-o(1)$, $||EP_C|| \lesssim \sqrt{n} + \sqrt{q}$ by Proposition \ref{prop:norm_E}, and $||M_0|| \lesssim \sqrt{n} + \sqrt{k}$ by Corollary 5.35 of \citet{vershynin_12} 
        \item With probability at least $1-o(1)$, we have 
        \begin{equation*}
            \begin{aligned}
||P_C E^\top U_0 U_0^\top E P_C||_2 \leq ||P_C E^\top U_0||_2^2 \lesssim \sqrt{kq}
            \end{aligned}
        \end{equation*}
        by Proposition \ref{prop:norm_E}.

        \item With probability at least $1-o(1)$, we have 
        \begin{equation*}
            \begin{aligned}
                || P_C Y^\top (U U^\top -  U_0 U_0^\top) Y P_C||_2 &\leq ||YP_C||_2^2 ||U U^\top -  U_0 U_0^\top||  \lesssim (\sqrt{n}||C \Gamma||_2 + \sqrt{nq} )^2\big(\frac{1}{n} + \frac{1}{p}\big) \\
                &= (||C \Gamma||_2^2 + q)\big(1 + \frac{n}{p}\big)
            \end{aligned}
        \end{equation*}
since, with probability at least $1-o(1)$, 
        $ ||U U^\top - U_0 U_0^\top || \lesssim \frac{1}{n} + \frac{1}{p}$ by Proposition \ref{prop:reovery_P} and $||Y P_C||_2 \lesssim \sqrt{n}||C \Gamma|| + \sqrt{nq}$ by Proposition \ref{prop:norm_Y}.
    \end{enumerate}
Combining all the above and letting $r_{\mathcal C, n} = L_{\mathcal C} - ||C \Gamma_0||_2^2$, we obtain
\begin{equation*}
    \begin{aligned}
        |r_{\mathcal C, n} | \lesssim ||C \Gamma_0||_2^2\big(\frac{1}{\sqrt{n}} + \frac{1}{p}\big) +  ||C \Gamma_0||_2\big(1 + \sqrt{\frac{q}{n}} \big) + \frac{q}{n} + \frac{q}{p}
    \end{aligned}
\end{equation*}
with probability at least $1-o(1)$.
Analogous steps lead to $ |r_{\mathcal N, n} | \lesssim ||\Psi_0||_2^2 (1/\sqrt{n} + 1/p) +  ||\Psi_0||_2\big(1 + \sqrt{(p-q) / n} \big) + (p-q) / n +(p-q) / p$, with $r_{\mathcal N, n} = L_{\mathcal N} - ||\Psi_0||_2^2$.
\end{proof}

\section{Additional details and results for global fever data }

\subsection{Additional details on global fever dataset}

\begin{table}[h]
\centering
\caption{Pathogen stratified by country in the global fever dataset.}
\label{tab:pathogen_country_counts}
\begin{tabular}{llrr}
\hline
Pathogen & Infection type & Sri Lanka & United States \\
\hline
Coxiella burnetii & Bacterial    & 3  & 0  \\
Dengue             & Viral        & 43 & 0  \\
Enterobacter       & Bacterial    & 0  & 17 \\
Influenza A and B     & Viral        & 38 & 29 \\
Leptospira         & Bacterial    & 30 & 0  \\
Non-infection       & Non-infection & 0  & 67 \\
Respiratory virus, other   & Viral        & 0  & 13 \\
Rickettsia         & Bacterial    & 27 & 0  \\
Staphylococcus     & Bacterial    & 0  & 10 \\
Streptococcus      & Bacterial    & 0  & 14 \\
\hline
\end{tabular}
\end{table}

\subsection{Additional gene set enrichment analysis for global fever data}

The BASIL factors were interpreted by applying gene set enrichment analysis (GSEA) to the loadings separately for each factor. While the Gene Ontology Molecular Function collection was used to construct the factors to reduce selection bias in gene set curation, generating contextualized interpretations using this collection is challenging. Therefore, the human Reactome pathways \citep{ligtenberg2025reactomedb} were used in the GSEA. 
The Ensemble identifiers were mapped to their Entrez counterparts using AnnotationDbi \citep{pages2025annotationdbi}, and Entrez identifiers with multiple loadings were summarized using the median. GSEA was carried out using fast GSEA \citep{korotkevich2016fast} within the clusterProfiler package \citep{xu2024using}. The Reactome pathways with 10-1,000 Entrez identifiers were used, and 100,000 permutations of the FGSEA-simple procedure were run to estimate p-values for those pathways. 
Pathways with a false discovery rate $\leq 5\%$ were used to construct network diagrams that could help to identify summary interpretations for each factor \citep{csardi2026igraph}. Pathways with a Jaccard index exceeding 0.4 were grouped, and then singleton and doublet groups were removed. Lastly, community detection was applied to generate homogenous groups for interpretation. The resulting clusters for the first two factors are showed in Figure \ref{fig:appEA}a.

\begin{figure}[t]
\begin{center}
 \includegraphics[width=0.99\textwidth]{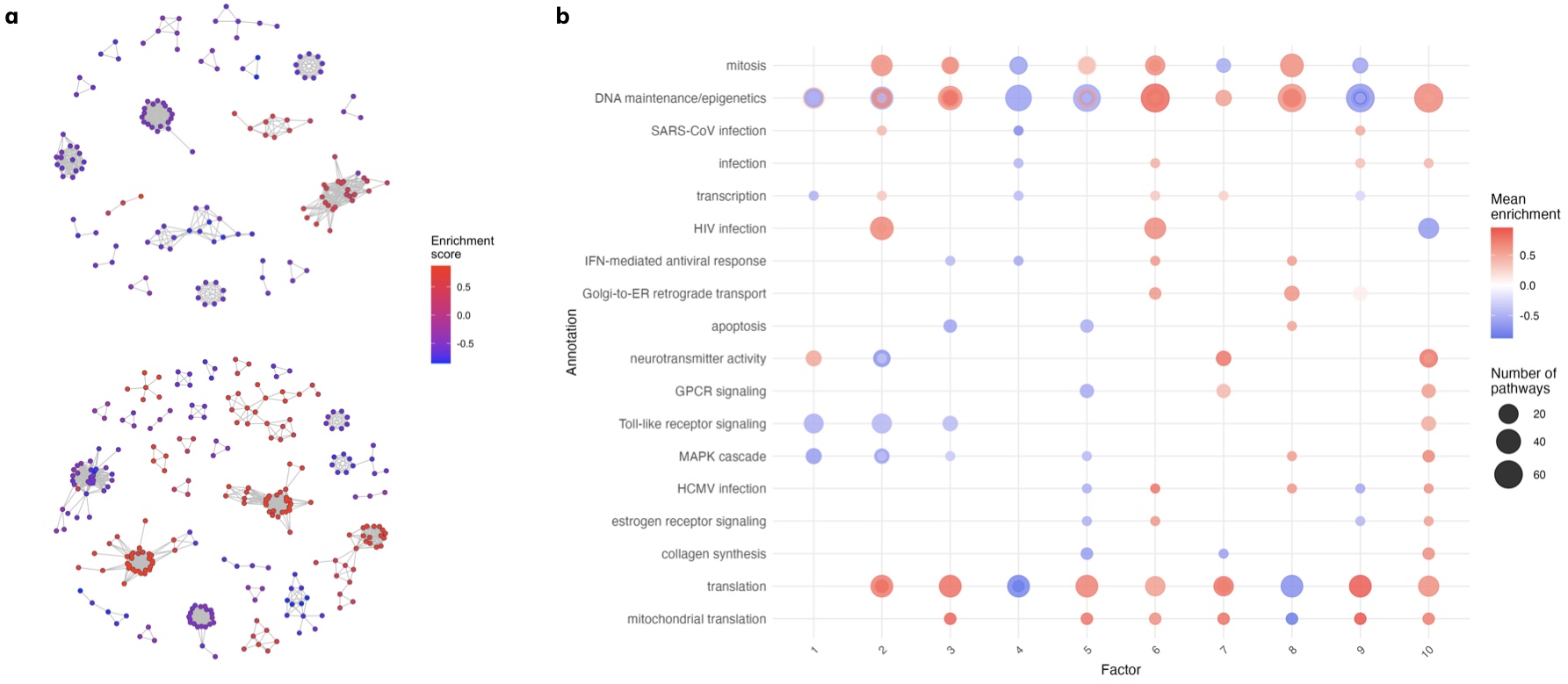} 
    \caption{Panel a: network maps for the first two factors, following filtering and community detection, overlaid with the enrichment scores. Panel b: factor annotations for the first ten factors. }     
    \label{fig:appEA}
\end{center}
\end{figure}

Figure \ref{fig:appEA}b displays the factor annotations for the first ten factors resulting from the GSEA.
In summary, the factor analysis identifies biologically interpretable axes of host immune variation, several of which align with established immune paradigms.
Figure \ref{fig:pairwiseGSEA} shows the pairwise plots for the elements of the first ten factors, colored by different pathogens. For example, Factor 2 is enriched for an HIV infection gene set and ribosomal/translation pathways, suggesting a host state prioritizing protein synthesis—consistent with responses to viral pathogens such as influenza or dengue that depend on host ribosomes for replication \citep{walsh2011viral}. 
Notably, some intracellular bacteria elicit similar interferon-dominated responses. Coxiella and Rickettsia infections, for instance, activate interferon-stimulated genes \citep{colonne2011rickettsia}, reflecting their intracellular lifestyle and reliance on host cellular machinery. 
Factor 10 is enriched for broad immune and stress-response pathways and uniquely shows positive enrichment for inflammatory signaling. 
The involvement of IL-1 and NF-$\kappa$B pathways is characteristic of bacterial infections, as pathogens such as Staphylococcus and Streptococcus strongly activate NF-$\kappa$B and inflammasome signaling \citep{liu2017nf}. Other factors capture variation in immune cell metabolic and proliferative states (Factors 3, 4, 8, and 9) or systemic stress responses (Factor 7).

\begin{figure}[t]
\begin{center}
    \includegraphics[width=0.97\textwidth]{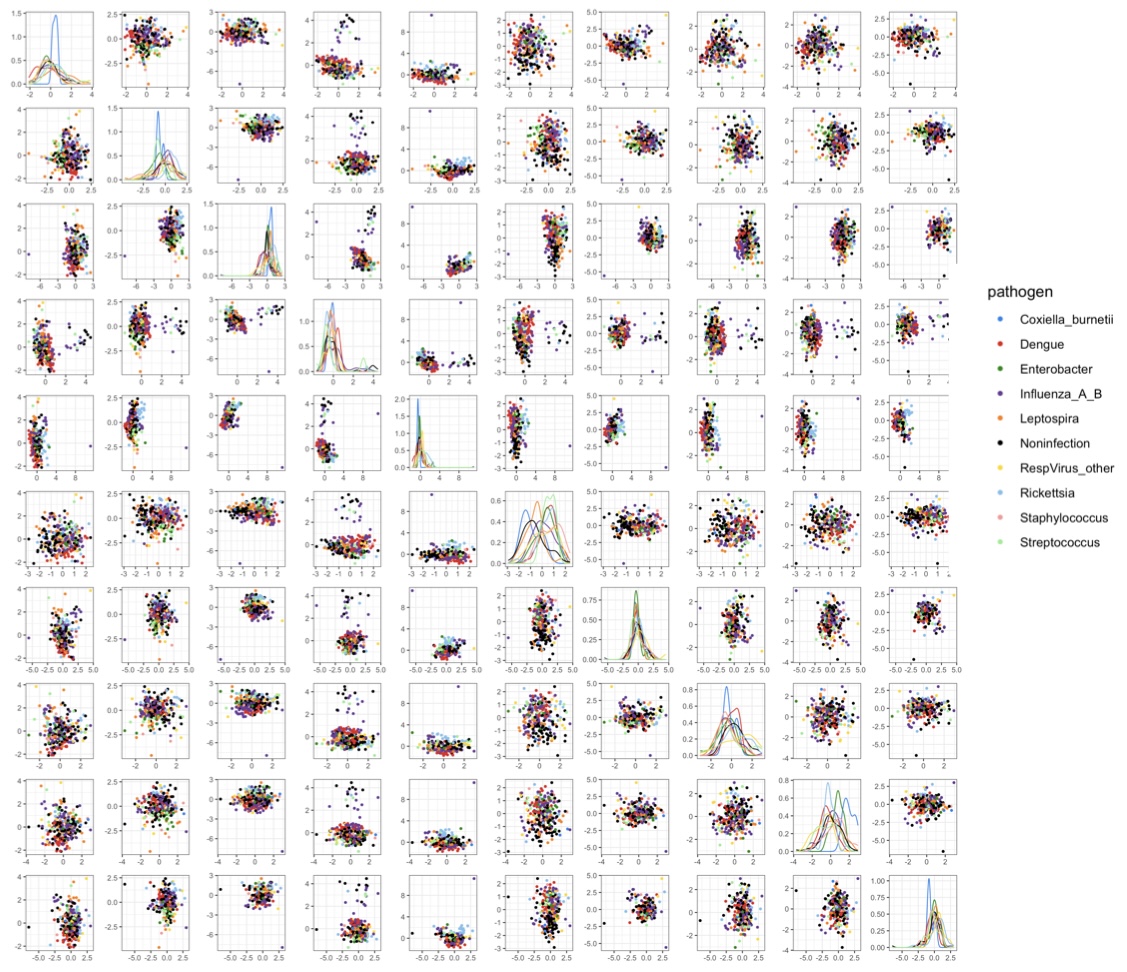} 
    \caption{Pairwise plots for the elements of $M$ are shown for the first ten factors. The sample-level points are colored by the pathogen. }
    \label{fig:pairwiseGSEA}
\end{center}
\end{figure}

\section{Out-of-sample predictive performance} \label{sec:oos}

We tested the out-of-sample predictive performance of competing methods in the two datasets analyzed (whole-blood RNA seq data and global fever data). For each data set, we perform 50 random $80\%-20\%$ train-test splits. We fit each method in the training set and compute the log-likelihood in the test set $\mathcal Y_{test}$, which, for a given estimator of the covariance $\hat \Sigma$, is given by
\begin{equation*}
    l(\hat \Sigma) = - \frac{n_{test}}{2}\log \big(2\pi|\hat \Sigma|\big) - \frac{1}{2} \sum_{y \in \mathcal Y_{test}} y^\top \hat \Sigma^{-1} y, 
\end{equation*}
where $n_{test}$ denotes the size of the test set. For BASIL, we used the posterior mean to estimate the covariance. For PLIER, we use the same adjustment described in the simulation section for the low-rank component and let $\hat \Sigma = \hat \Lambda (\hat M^\top \hat M) / n \hat \Lambda^{\top} + \hat \sigma^2 I_p$, where $\hat \sigma^2$ is the empirical variance of the residual variability in the training set. We set the maximum number of latent factors for the criterion in \eqref{eq:JIC} and ROTATE to the minimum number of principal components of the training data explaining at least $80\%$ of the variability. Both BASIL and PLIER are fitted using the latent dimension estimated via the criterion in \eqref{eq:JIC}. Figure~\ref{fig:oos} shows the out-of-sample log-likelihood across the 50 random splits. 

\begin{figure}[t]
\begin{center}
    \includegraphics[width=0.8\textwidth]{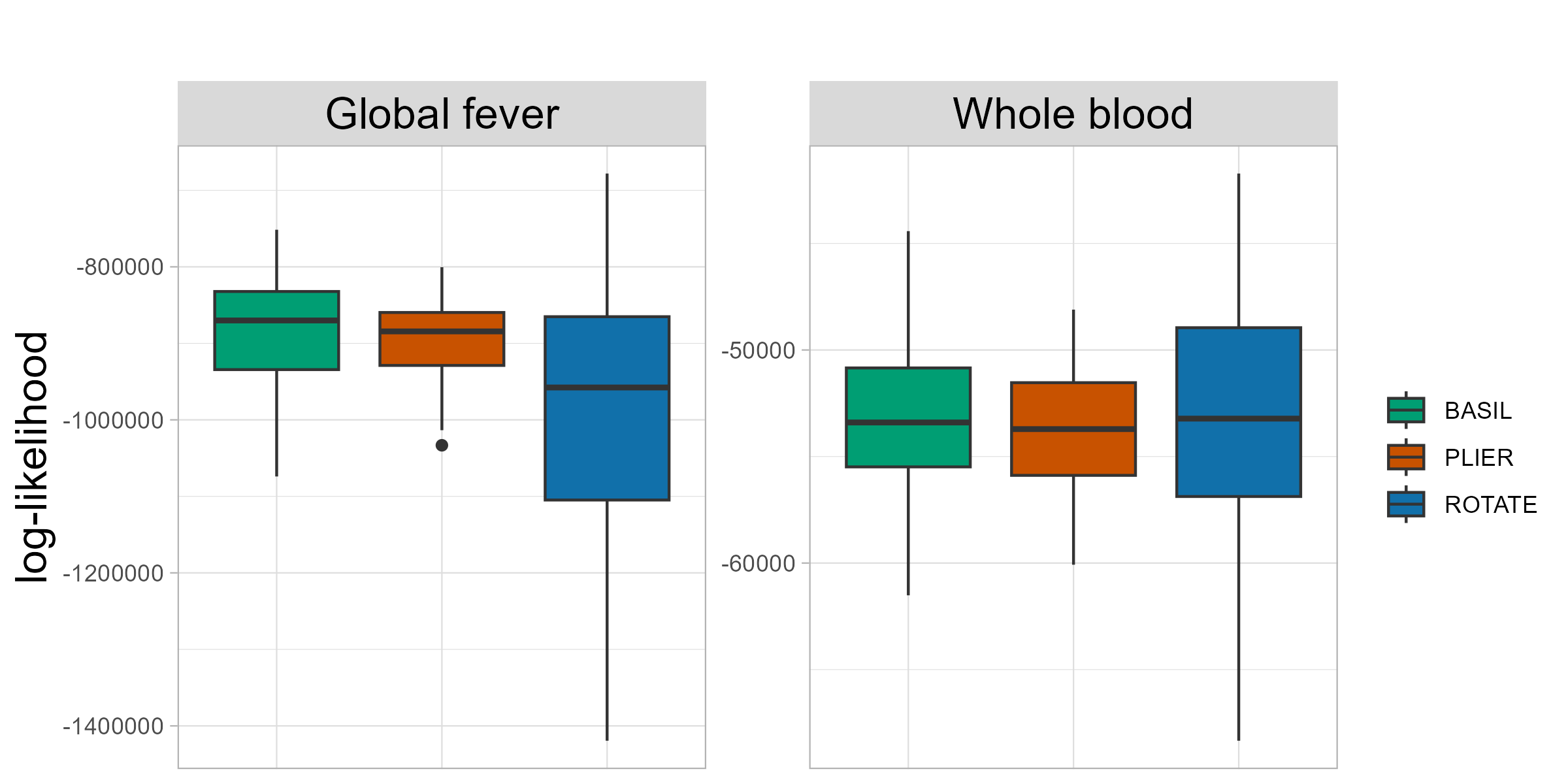} 
    \caption{ \textbf{Out-of-sample log-likelihood}. Out-of-log-likelihood of BASIL (green), PLIER (red), and ROTATE (blue) across 50 random $80\%-20\%$ train-test splits.}
    \label{fig:oos}
\end{center}
\end{figure}

In terms of predictive accuracy, the three methodologies perform comparably on the whole blood data set of \citet{mao2019pathway}, while, on the global fever data set of \citet{ko2023host}, ROTATE is the worst performing method, and BASIL slightly outperforms PLIER. 
Although the primary goal of BASIL is interpretable dimensionality reduction rather than prediction, out-of-sample predictive performance provides a critical assessment of model adequacy. Strong predictive performance indicates that the learned underlying representation  captures the biological signal in the gene co-expression patterns, rather than overfitting to training data. The comparable or improved performance of BASIL thus validates that our modeling framework is sufficiently flexible to accurately characterize the underlying structure of gene co-regulation.

\end{document}